\providecommand{\tabularnewline}{\\}
\numberwithin{equation}{section}
\numberwithin{figure}{section}
\theoremstyle{plain}
\newtheorem{thm}{\protect\theoremname}
  \theoremstyle{plain}
  \newtheorem{prop}[thm]{\protect\propositionname}
  \theoremstyle{definition}
  \newtheorem{defn}[thm]{\protect\definitionname}
  \theoremstyle{plain}
  \newtheorem{cor}[thm]{\protect\corollaryname}
  \theoremstyle{plain}
  \newtheorem{lem}[thm]{\protect\lemmaname}
\providecommand{\corollaryname}{Corollary}
  \providecommand{\definitionname}{Definition}
  \providecommand{\lemmaname}{Lemma}
  \providecommand{\propositionname}{Proposition}
\providecommand{\theoremname}{Theorem}
  \providecommand{\corollaryname}{Corollary}
  \providecommand{\definitionname}{Definition}
  \providecommand{\lemmaname}{Lemma}
  \providecommand{\propositionname}{Proposition}
\providecommand{\theoremname}{Theorem}
\begin{document}

\title{Discrete Nahm equations for $\text{SU}(N)$ hyperbolic monopoles }

\author{Joseph Y C Chan}

\address{Department of Mathematics and Statistics, University of Melbourne,
Vic 3010, Australia}

\email{jchan3@student.unimelb.edu.au}

\date{29 June 2015}

\begin{abstract}
In a paper of Braam and Austin, $\text{SU}(2)$ magnetic
monopoles in hyperbolic space $H^{3}$ were shown to be the same as
solutions to matrix-valued difference equations called the discrete
Nahm equations. Here, I discover the $(N-1)$-interval discrete Nahm equations and show that their solutions are equivalent to 
$\text{SU}(N)$ hyperbolic monopoles. These discrete time evolution
equations on an interval feature a jump in matrix dimensions at certain
points in the evolution, which are given by the mass data of the corresponding
monopole. I prove the correspondence with higher rank hyperbolic monopoles
using localisation and Chern characters. I then prove that the monopole
is determined up to gauge transformations by its ``holographic image''
of $\text{U}(1)$ fields at the asymptotic boundary of $H^{3}$. 
\end{abstract}
\maketitle


\section{Outline}

The Nahm equations are the following system of ODE 
\[
\dfrac{d\left(\sigma+\sigma^{\ast}\right)}{dt}=\left[\sigma,\sigma^{\ast}\right]+\left[\tau,\tau^{\ast}\right]
\]
\[
\dfrac{d\tau}{dt}=\left[\sigma,\tau\right]
\]
where $\sigma$ and $\tau$ are complex-valued $k\times k$ matrices,
$k\in\mathbb{N}$ and $t\in[-p,p]$, $p\in\mathbb{Z}\text{ or }\frac{1}{2}+\mathbb{Z}$.
The solutions of the Nahm equations are in one-to-one correspondence
with $\text{SU}(2)$ magnetic monopoles in $\mathbb{R}^{3}$ of mass
$p$ and charge $k$ \cite{key-1}.

\begin{table}[H]
\begin{tabular}{|c|c|c|}
\cline{2-3} 
\multicolumn{1}{c|}{} & {\small $\text{SU}(2)$ magnetic monopoles}  & {\small $\text{SU}(N)$ magnetic monopoles}\tabularnewline
\hline 
{\small Euclidean $\mathbb{R}^{3}$}  & {\small Nahm equations}  & {\small $(N-1)$-interval Nahm equations}\tabularnewline
\hline 
{\small Hyperbolic $H^{3}$}  & {\small discrete Nahm equations}  & {\small $(N-1)$-interval discrete Nahm equations}\tabularnewline
\hline 
\end{tabular}

\caption{Monopoles and Nahm equations}
\end{table}

Hurtubise and Murray \cite{key-2} discovered what I call $(N-1)$-interval
Nahm equations for $\text{SU}(N)$ magnetic monopoles in $\mathbb{R}^{3}$.
The $(N-1)$-interval Nahm equations resemble the Nahm equations on
intervals $\left[p_{1},p_{2}\right],\ldots,\left[p_{N-1},-p_{N}\right]$
where $p_{1},\ldots,p_{N}\in\mathbb{Z}\text{ or }\frac{1}{2}+\mathbb{Z}$.
Across each boundary $t=p_{i}$ for some $i\in\left\{ 1,\ldots,N-1\right\} $,
the matrices $\sigma,\tau$ change dimensions from $\left(k_{1}+\ldots+k_{i-1}\right)\times\left(k_{1}+\ldots+k_{i-1}\right)$
to $\left(k_{1}+\ldots+k_{i}\right)\times\left(k_{1}+\ldots+k_{i}\right)$.
$\sigma$ and $\tau$ have a simple pole at each boundary and their
residue at a pole is a representation of $\text{SU}(2)$.

Braam and Austin \cite{key-3} then found the discrete Nahm equations
\[
\left[\beta_{i+\frac{1}{2}},\beta_{i+\frac{1}{2}}^{\ast}\right]+\gamma_{i+1}\gamma_{i+1}^{\ast}-\gamma_{i}^{\ast}\gamma_{i}=0
\]
\[
\beta_{i-\frac{1}{2}}\gamma_{i}-\gamma_{i}\beta_{i+\frac{1}{2}}=0
\]
where $\beta_{i}$ and $\gamma_{i}$ are complex-valued $k\times k$
matrices and $i\in\left\{ -p,-p+1,\ldots,p-1,p\right\} $, $p\in\mathbb{Z}\text{ or }\frac{1}{2}+\mathbb{Z}$
(Notably, Braam and Austin only treat the half-integer case). The
solutions to the discrete Nahm equations are in one to one correspondence
with $\text{SU}(2)$ magnetic monopoles in hyperbolic 3-space $H^{3}$.

In this paper, I introduce the $(N-1)$-interval discrete Nahm equations
whose solutions are in one-to-one correspondence with (framed) $\text{SU}(N)$
magnetic monopoles in hyperbolic space. As in the continuous case,
the $(N-1)$-interval discrete Nahm equations resemble discrete Nahm
equations on $(N-1)$ intervals and at each boundary between adjacent
intervals, the matrices $\beta_{i}$ and $\gamma_{i}$ jump in dimensions.
As far as I am aware, this is the first time that this change of dimensions
behaviour has been found in a system of matrix difference equations.

Atiyah showed that hyperbolic magnetic monopoles are $S^{1}$-invariant
instantons on $\mathbb{R}^{4}$ \cite{key-4}. The $(N-1)$-interval
discrete Nahm equations arise from the ADHM construction applied to
$S^{1}$-invariant instantons. The matrices $\beta_{i}$ and $\gamma_{i}$
are found to be the block matrices within the ADHM matrices equivariant
with respect to the induced $S^{1}$ action. The $(N-1)$-interval
discrete Nahm equations are then the ADHM equations restricted to
these equivariant blocks.

The $(N-1)$-interval discrete Nahm equations can be interpreted as
the discrete evolution of block matrices within the ADHM matrices.
The solution matrices at a boundary are to be thought of as boundary
data for the evolution equations.

Atiyah also proved that there is an isomorphism between the moduli
of monopoles and the moduli of rational maps \cite{key-4,key-5}.
I produce explicit formulae for the rational map of an $\text{SU}(N)$
hyperbolic monopole in terms of the boundary data of a solution of
the $(N-1)$-interval discrete Nahm equations.

Finally, Braam and Austin \cite{key-3} showed that the boundary data
of an $\text{SU}(2)$ hyperbolic monopole was equivalent with the
boundary data in the sense of discrete Nahm equations and so determined
the monopole (up to gauge equivalence). The proof of the analogous
theorem for the $\text{SU}(N)$ case follows the same lines. However,
it is notable that the generalisation of the map 
\[
\mathbb{P}^{1}\rightarrow\mathbb{P}^{k}
\]
which appears in Braam and Austin's theorem generalises to $\left(N-1\right)$
maps from $\mathbb{P}^{1}$ into the manifold of two term partial
flags.

\medskip{}


\section{Monopoles and Instantons}

An $\text{SU}(N)$ instanton on $\mathbb{R}^{4}$ is a connection
1-form $A_{\square}$ on the (trivial) principal $\text{SU}(N)$ bundle
$P\rightarrow\mathbb{R}^{4}$ which satisfies the (anti-)self-duality
equations 
\vspace{0.3cm}
\[
F_{\square}=\pm\star F_{\square}
\]
where $F_{\square}$ is the curvature form of $A_{\square}$, and
the asymptotic decay condition, that $A_{\square}$ must extend to
a connection on $S^{4}$ (the conformal compactification of $\mathbb{R}^{4}$).
We will restrict to the anti-self-dual instantons. For an instanton,
the Yang-Mills Lagrangian 
\vspace{0.3cm}
\[
-\int_{\mathbb{R}^{4}}\text{Tr }F_{\square}\wedge\star F_{\square}
\]
is an $L^{2}$-norm of the curvature and is equal to $8\pi\kappa$
where $\kappa$ is an integer. $\kappa$ is a topological invariant
called the \textit{instanton charge}. (See \cite{key-6} for a complete
treatment.)

A magnetic monopole $(A,\phi)$ on $\mathbb{R}^{3}$ (euclidean) is
a connection 1-form $A$ on the principal $\text{SU}(N)$ bundle $P\rightarrow\mathbb{R}^{3}$
and a section $\phi$ of the adjoint bundle $\text{ad }P$ which satisfies
the Bogolmonyi equations 
\[
F_{A}=\star_{e}D_{A}\phi
\]
where the Hodge star dual $\star_{e}$ is defined by the euclidean
metric, and a choice of asymptotic decay conditions. The moduli of
euclidean monopoles is foliated by mass numbers $p_{1},\ldots,p_{N-1}\in\mathbb{R}$
and magnetic charge numbers $k_{1},\ldots,k_{N-1}\in\mathbb{Z}$.

A magnetic monopole $(A,\phi)$ in hyperbolic space $H^{3}$ can be
defined as an instanton on $\mathbb{R}^{4}$ invariant under the following
circle $S^{1}$ action \cite{key-4}. Choose coordinates $(x_{1},x_{2},x_{3},x_{4})$
for $\mathbb{R}^{4}$ and rotate the $x_{3}x_{4}$ plane with the
$x_{1}x_{2}$ plane as the axis of rotation. Then we may use new coordinates
$(x_{1},x_{2},r,\theta)$ where $e^{i\alpha}\in S^{1}$ acts by $\theta\mapsto\alpha\theta$.
The euclidean metric in these coordinates is 
\[
ds^{2}=r^{2}\left(\dfrac{dx_{1}^{2}+dx_{2}^{2}+dr^{2}}{r^{2}}+d\theta^{2}\right).
\]
Without the axis of rotation, $\mathbb{R}^{4}$ is foliated by upper
half spaces and this metric induces the Poincaré hyperbolic metric
on each. Conformally, 
\[
\mathbb{R}^{4}-\mathbb{R}^{2}\simeq S^{1}\times H^{3}.
\]
The instantons which are invariant under this circle action may be
interpreted as a connection $A$ on $H^{3}$ with all the right asymptotic
decay conditions following from the original instanton.

A monopole connection $A_{\square}$ in these coordinates is equivalent
to a potential $A=A_{x_{1}}dx_{1}+A_{x_{2}}dx_{2}+A_{r}dr$ and a
Higgs field $\phi$ (the $d\theta$ part), a section of the adjoint
bundle. The self-duality condition reduces to the \textit{hyperbolic}
Bogolmonyi equations 
\[
F_{A}=\star D_{A}\phi
\]
where the Hodge star $\star$ is defined by the above hyperbolic metric.

The moduli space of hyperbolic monopoles $(A,\phi)$ has components labelled by \textit{mass
numbers} $p_{1},\ldots,p_{N-1}\in\mathbb{Z}$ (or in $\frac{1}{2}+\mathbb{Z}$
if $N$ is even) which I order $p_1<p_2<\ldots<p_{N-1}$ and corresponding \textit{charge numbers} $k_{1},\ldots,k_{N-1}\in\mathbb{Z}$.
Note the restriction (compared to the euclidean case) on the mass
numbers which arise as the weights of the $S^{1}$-action; this is
a drawback of defining hyperbolic monopoles as $S^{1}$-invariant
instantons. For the rest of the paper, the mass numbers will be assumed
to be distinct; this is the case of maximal symmetry breaking where
the $\text{SU}(N)$ symmetry is reduced to the symmetry of a maximal
torus $\text{U}(1)^{N-1}$ which preserves $\phi$ at a point on the
conformal sphere at infinity.

To employ the ADHM construction \cite{key-6,key-7}, we need to work
in the twistor space $\mathbb{P}^{3}$ of $\mathbb{R}^{4}\subset S^{4}$.
Consider the fibration 
\begin{equation}
\mathbb{CP}^{3}\rightarrow\mathbb{HP}^{1}\simeq S^{4}\label{eq:twistor fibration}
\end{equation}
\[
[x:y:z:w]\mapsto[x+yj:z+wj].
\]
The left multiplication by $j\in\mathbb{H}$ leaves $S^{4}$ invariant
but induces an involution on $\mathbb{P}^{3}$ 
\[
J[x:y:z:w]=[\bar{y}:-\bar{x}:\bar{w}:-\bar{z}]
\]
acting as the antipodal map on the $\mathbb{P}^{1}$ fibres of the
twistor fibration\eqref{eq:twistor fibration}, commonly called a
``real structure'' on $\mathbb{P}^{3}$.

The Penrose-Ward transform is a correspondence between

\begin{minipage}[t]{1\columnwidth}%
\begin{enumerate}
\item instantons on $S^{4}$ realised as vector bundles with unitary structure
and a connection with anti-self-dual curvature, and 
\item holomorphic vector bundles $E$ on $\mathbb{P}^{3}$ with a real form.\end{enumerate}
\end{minipage}

The circle action on $\mathbb{R}^{4}$ lifts to $\mathbb{P}^{3}$
along this fibration as the action 
\[
[x:y:z:w]\mapsto[c^{-1/2}x:c^{1/2}y:c^{-1/2}z:c^{1/2}w]
\]
where $c\in S^{1}\subset\mathbb{C}^{\times}$.


\begin{figure}
\includegraphics[scale=0.1]{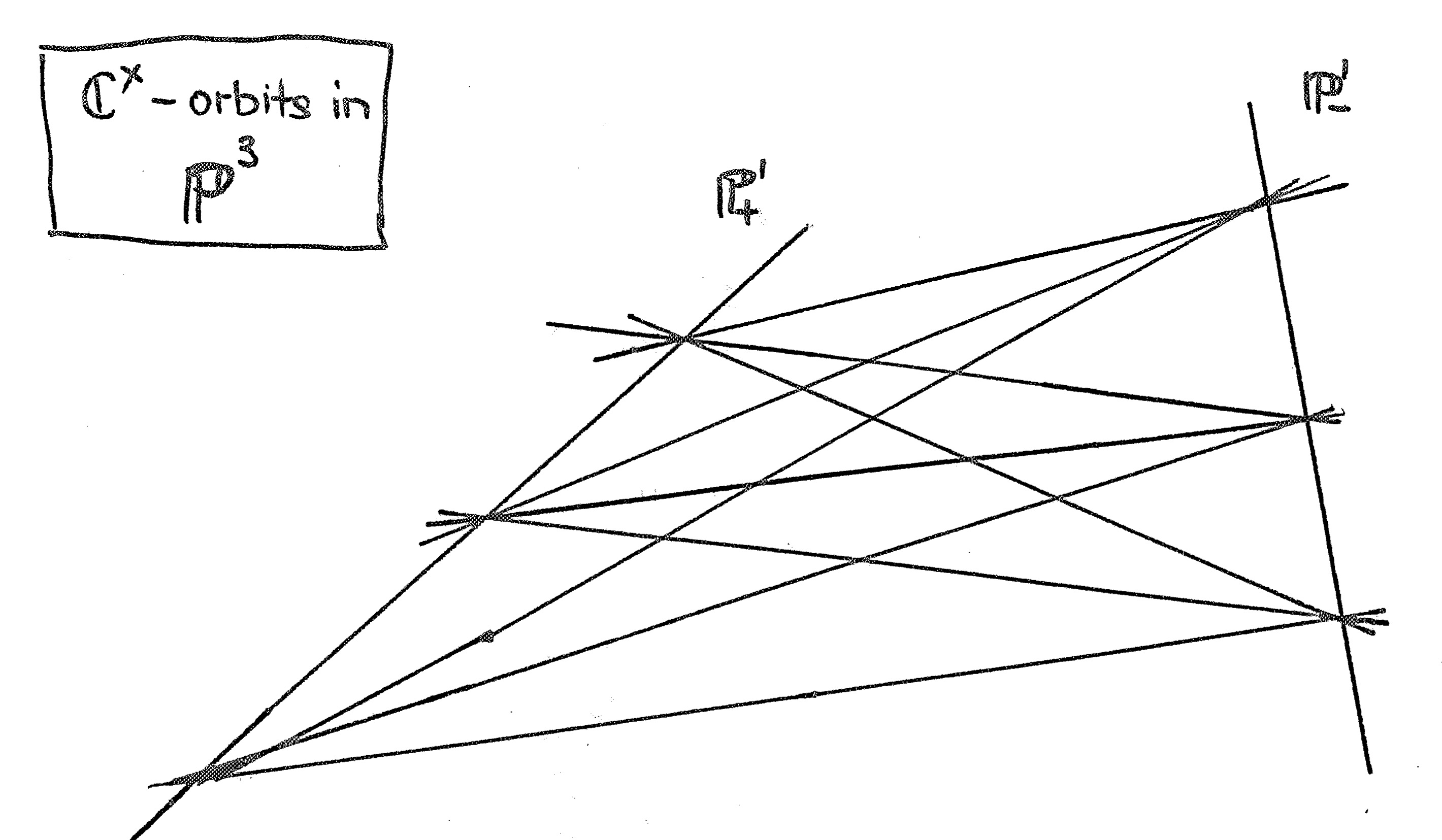}

\caption{The decomposition of $\mathbb{P}^{3}$ by the $\mathbb{C}^{\times}$-action
into fixed lines and $\mathbb{C}^{\times}$ orbits.}
\end{figure}


In $\mathbb{P}^{3}$, there are two fixed lines $\mathbb{P}_{+}^{1}=\left\{ \left[x:0:z:0\right]\right\} $
and $\mathbb{P}_{-}^{1}=\left\{ \left[0:y:0:w\right]\right\} $ of
the $\mathbb{C}^{\times}$-action which cover the fixed $S_{\partial H}^{2}\subset S^{4}$.
The $\mathbb{C}^{\times}$-action is free on $\mathbb{P}^{3}-\mathbb{P}_{+}^{1}\cup\mathbb{P}_{-}^{1}$
so we can decompose it into $\mathbb{C}^{\times}$-orbits. The boundary
of each $\mathbb{C}^{\times}$-orbit is a pair of points, one from
each fixed line and each point in $\mathbb{P}_{+}^{1}\times\mathbb{P}_{-}^{1}$
uniquely determines a $\mathbb{C}^{\times}$-orbit. Thus the space
of orbits 
\[
Q=\dfrac{\mathbb{P}^{3}-\mathbb{P}_{+}^{1}\cup\mathbb{P}_{-}^{1}}{\mathbb{C}^{\times}}
\]
is isomorphic to $\mathbb{P}^{1}\times\mathbb{P}^{1}$. $Q$ is known
as the hyperbolic monopole mini-twistor space.


\begin{figure}
\includegraphics[scale=0.1]{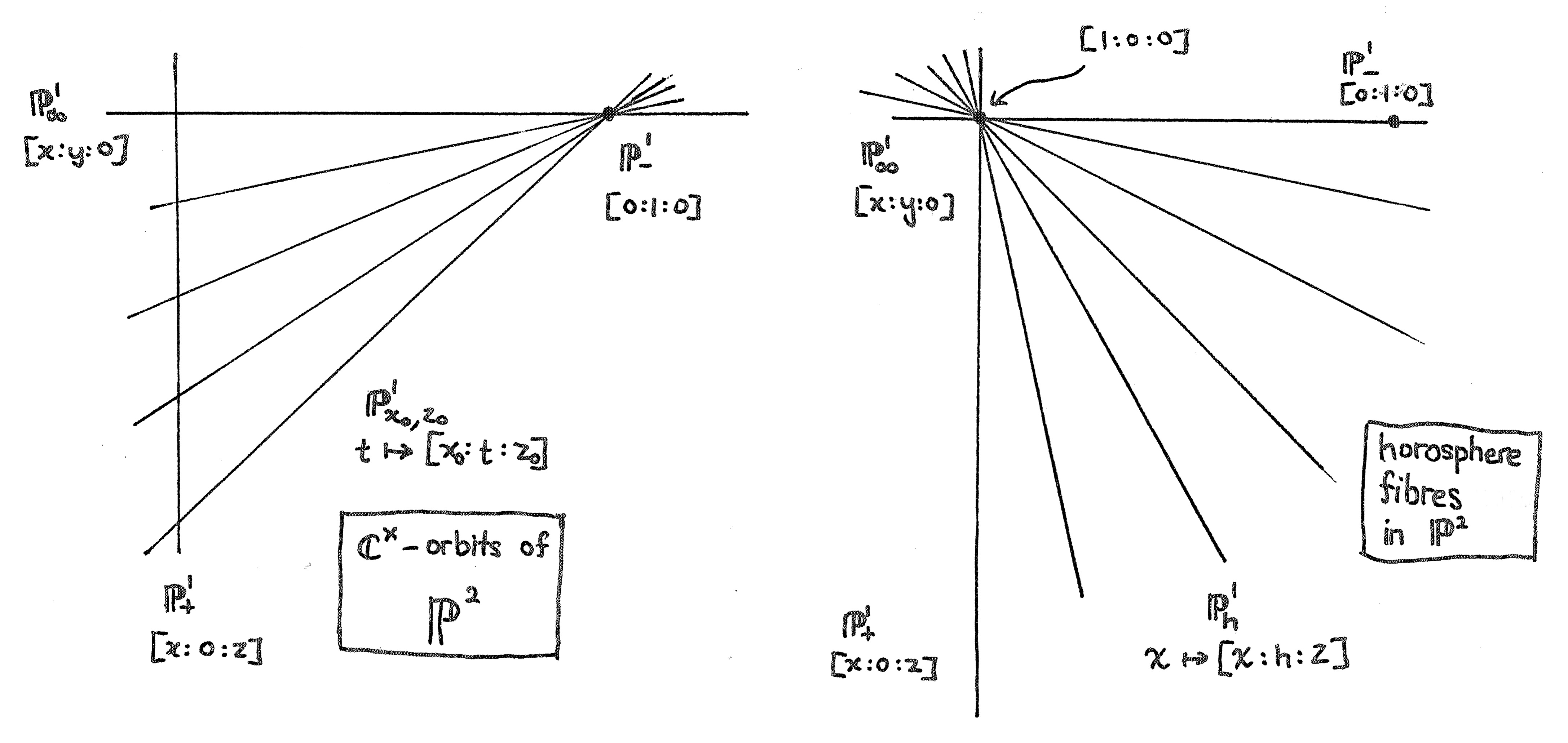}

\caption{The $\mathbb{C}^{\times}$ orbits of $\mathbb{P}^{2}$ and the fibres
of horospheres intersecting $\{\infty\}\in\partial H^{3}$.\label{fig:anatomy of P^2}}
\end{figure}


The projective plane $\mathbb{P}^{2}$ satisfying $w=0$ contains
the fixed line $\mathbb{P}_{+}^{1}$ and intersects $\mathbb{P}_{-}^{1}$
at a point $X_{-}$. This choice of $\mathbb{P}^{2}$ picks out a
unique point $\{\infty\}\in\partial H^{3}$ covered by $\mathbb{P}_{\infty}^{1}=\left\{ [x:y:0]\right\} $,
the only fibre over a point of $\partial H^{3}$ contained in $\mathbb{P}^{2}$.
Assume that $z=-1$ by projectivity and then $\mathbb{P}^{2}-\mathbb{P}_{+}^{1}$
is decomposed into a family of orbits $\{\mathbb{P}_{x_{0}}^{1}\}$
of the $\mathbb{C}^{\times}$-action, indexed by $x_{0}\in\mathbb{P}_{+}^{1}$
where the orbits intersect $\mathbb{P}_{+}^{1}$. $\mathbb{P}^{2}-\mathbb{P}_{+}^{1}$
also decomposes into a family of lines $\{\mathbb{P}_{t}^{1}\}_{t\in\mathbb{P}_{x_{0}}^{1}}$
(for some fixed choice of $x_{0}$) intersecting the point $\left[1:0:0\right]$
(the intersection of $\mathbb{P}_{+}^{1}$ and $\mathbb{P}_{\infty}^{1}$)
which map to horospheres in $H^{3}$ at $\{\infty\}$.

A framing of an instanton is an isomorphism $P_{\infty}\tilde{\rightarrow}\text{SU}(N)$
for the fibre of $P$ at the point at infinity of $S^{4}$. A framed
$\text{SU}(N)$ instanton is an instanton together with a framing.

The ADHM construction can be carried out over either $\mathbb{P}^{2}$
or $\mathbb{P}^{3}$. The $\mathbb{P}^{3}$ construction can always
yield the $\mathbb{P}^{2}$ construction via geometric invariant theory
but the converse is not true.

By a theorem of Donaldson \cite{key-8}, there is a natural correspondence
between framed instantons and holomorphic bundles on $\mathbb{P}^{2}\subset\mathbb{P}^{3}$
(with first Chern class $c_{1}=0$ since $\text{SU}(N)$ has determinant
1) with a fixed holomorphic trivialisation at the fibre $\mathbb{P}_{\infty}^{1}$
of infinity via the twistor fibration \eqref{eq:twistor fibration}.

Such a holomorphic bundle $E$ on $\mathbb{P}^{2}$ can be constructed
as the cohomology of monads \cite{key-9}. A monad over $\mathbb{P}^{2}$
is the following pair of maps 
\[
\begin{array}{ccccc}
\underline{H} & \overset{A_{X}}{\rightarrow} & \underline{K} & \overset{B_{X}}{\rightarrow} & \underline{L}\end{array}
\]
where 
\begin{enumerate}
\item $\underline{H}=H\otimes\mathcal{O}(-1)$, $\underline{K}=K\otimes\mathcal{O}$,
$\underline{L}=L\otimes\mathcal{O}(1)$; 
\item $H$,$K$,$L$ are $\kappa$,$\kappa+N$,$\kappa$ dimensional vector
spaces over $\mathbb{C}$ respectively; 
\item $\mathcal{O}(1)$ is the Hopf bundle over $\mathbb{P}^{2}$ and 
\item $A_{X}$,$B_{X}$ are linear maps for each $[x:y:z]=X\in\mathbb{P}^{2}$
and depend linearly on $X$. 
\end{enumerate}
The map $A_{X}$ needs to be injective, the map $B_{X}$ needs to
be surjective and $B_{X}A_{X}\equiv0_{\kappa}$. 

Since the maps $A_{X}$,$B_{X}$ vary holomorphically with $X\in\mathbb{P}^{2}$,
the holomorphic bundle $E$ can be defined fibre-wise by the cohomology
\[
E_{X}=\ker B_{X}/\text{im }A_{X}
\]
 of the monad. For an instanton, this construction is unique up to
an action of $\text{GL}_{HKL}=\text{GL}(H)\times\text{GL}(K)\times\text{GL}(L)$.

Following Donaldson, the conditions on $A_{X}$ and $B_{X}$ imply
that a basis can be chosen such that 
\[
A_{X}=\left[\begin{array}{c}
x+z\alpha_{1}\\
y+z\alpha_{2}\\
za
\end{array}\right]
\]
\[
B_{X}=\left[\begin{array}{ccc}
-y-z\alpha_{2} & x+z\alpha_{1} & zb\end{array}\right]
\]
where $\alpha_{1}$ and $\alpha_{2}$ are $\kappa\times\kappa$ matrices,
$a$ is a $N\times\kappa$ matrix, $b$ is a $\kappa\times N$ matrix
which we call ADHM matrices; they satisfy the complex ADHM equation
\begin{equation}
\left[\alpha_{1},\alpha_{2}\right]+ba=0.\label{eq:complex ADHM}
\end{equation}

The action of $\text{GL}_{HKL}$ on the monad induces the following
action of $\text{GL}(\kappa,\mathbb{C})\times\text{GL}(N,\mathbb{C})$
on the data $\alpha_{1},\alpha_{2},a$ and $b$ 
\[
\begin{aligned}\alpha_{i} & \mapsto g\alpha_{i}g^{-1}\\
a & \mapsto\lambda ag^{-1}\\
b & \mapsto gb\lambda^{-1}
\end{aligned}
\]
where $g\in\text{GL}(\kappa,\mathbb{C})$ and $\lambda\in\text{GL}(N,\mathbb{C})$.
We call this a ``gauge transformation'' of the ADHM data.

For the fibre $\mathbb{P}_{\infty}^{1}=\{[x:y:0]\}$ over infinity,
\[
A_{X}=\left[\begin{array}{c}
xI_{\kappa}\\
yI_{\kappa}\\
0_{N\times\kappa}
\end{array}\right]
\]
\[
B_{X}=\left[\begin{array}{ccc}
-yI_{\kappa} & xI_{\kappa} & 0_{\kappa\times N}\end{array}\right].
\]
Thus the trivialisation $\Psi:E\vert_{\mathbb{P}_{\infty}^{1}}\rightarrow\mathbb{C}^{N}$
fixes a basis (the ``frame'') for the last $N$ entries of $K$. 

The ADHM construction over $\mathbb{P}^{3}$ can be expressed in the
same way but with a dependence on the coordinate $w$ and an isomorphism
$\overline{J^{\ast}(E)}\cong E^{\ast}$ that covers the real structure
$J$ on $\mathbb{P}^{3}$ (See \cite{key-6,key-8} for details).

The maps $A_{X}$ and $B_{X}$ over $\mathbb{P}^{3}$ are 
\[
A_{X}=\left[\begin{array}{c}
x+z\alpha_{1}-w\alpha_{2}^{\ast}\\
y+z\alpha_{2}+w\alpha_{1}^{\ast}\\
za+wb^{\ast}
\end{array}\right]
\]
\[
B_{X}=\left[\begin{array}{ccc}
-y-z\alpha_{2}-w\alpha_{1}^{\ast} & x+z\alpha_{1}-w\alpha_{2}^{\ast} & zb-wa^{\ast}\end{array}\right].
\]
They satisfy both the complex ADHM equation \eqref{eq:complex ADHM}
and the real ADHM equation 
\begin{equation}
\mu=\left[\alpha_{1},\alpha_{1}^{\ast}\right]+\left[\alpha_{2},\alpha_{2}^{\ast}\right]+bb^{\ast}-a^{\ast}a=0
\end{equation}
which is a moment map $\mu:\mathbb{C}^{2(\kappa+N)}\rightarrow\mathfrak{u}(\kappa)$
for the system. This equation is only preserved by the subgroup of
$\text{GL}(\kappa,\mathbb{C})$ whose elements obey $g^{-1}=g^{\ast}$.
Thus there is a reduction to an action of $\text{U}(\kappa)\times U(N)$
on the data $\alpha_{1},\alpha_{2},a$ and $b$.

The holomorphic vector bundle constructed on $\mathbb{P}^{3}$ agrees
with the bundle constructed over $\mathbb{P}^{2}$ for the same ADHM
data $(\alpha_{1},\alpha_{2},a,b)$ - we will call them both $E$.

Over the fixed line $\mathbb{P}_{+}^{1}$, the $\mathbb{C}^{\times}$-action
induces a representation on the fibres of the holomorphic vector bundle
$E$. All the irreducible representations of $\mathbb{C}^{\times}$
are 1-dimensional so up to conjugation, the circle action (for $\text{SU}(N)$)
takes the form 
\[
c\mapsto\lambda(c)=\left[\begin{array}{cccc}
c^{p_{1}}\\
 & \ddots\\
 &  & c^{p_{N-1}}\\
 &  &  & c^{p_{N}}
\end{array}\right]
\]
where $p_{1}<\ldots<p_{N-1}$ (since they are assumed to be distinct)
are the weights of the $\mathbb{C}^{\times}$-action and they are
either all integers or if $N$ is even, they can also be all half-integers
(this is well-defined since the $\mathbb{C}^{\times}$ action comes
from a double cover of $\mathbb{C}^{\times}$). Since the structure
group is $\text{SU}(N)$, $p_{N}=-p_{1}-\ldots-p_{N-1}$.

To study hyperbolic monopoles via the ADHM construction, we examine
what it means for a monad to be ``circle invariant''. Work has been
done in this direction by Norbury in his PhD thesis \cite{key-10}
for the $\text{SU}(2)$ case; however, his results apply equally to
the $\text{SU}(N)$ case. Since this PhD thesis is not widely available,
a proof will be supplied. 
\begin{prop}[Norbury]
\label{prop: Cstar monad} A monad over $\mathbb{P}^{2}$ whose cohomology
is a holomorphic $\mathbb{C}^{N}$-vector bundle with trivialisation
data corresponding to a framed instanton on $\mathbb{R}^{4}$ is $\mathbb{C}^{\times}$-invariant
if and only if there exists a homomorphism $P_{c}:\mathbb{C}^{\times}\rightarrow\text{GL}(\kappa,\mathbb{C})$
such that

\begin{minipage}[t]{1\columnwidth}%
\begin{enumerate}
\item $\alpha_{1}=P_{c}\alpha_{1}P_{c}^{-1}$ 
\item $\alpha_{2}=cP_{c}\alpha_{2}P_{c}^{-1}$ 
\item $a=\lambda aP_{c}^{-1}$ 
\item $b=cP_{c}b\lambda^{-1}$\end{enumerate}
\end{minipage}\end{prop}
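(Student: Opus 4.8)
The plan is to read $\mathbb{C}^\times$-invariance of the monad as $\mathbb{C}^\times$-equivariance of its cohomology bundle $E=\ker B_X/\operatorname{im}A_X$: this is the existence of linear representations $P^H,P^K,P^L\colon\mathbb{C}^\times\to\mathrm{GL}(H),\mathrm{GL}(K),\mathrm{GL}(L)$ under which the monad maps intertwine with their pullbacks. I would work on the cone $\mathbb{C}^3\setminus\{0\}$, where $\mathcal{O}(-1)$ trivialises tautologically and $A_X,B_X$ become honest linear-in-$\hat X$ maps $\tilde A,\tilde B$, and where the projective action lifts to a linear action $\rho(c)$ on $\mathbb{C}^3$. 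The $\mathbb{P}^3$-action restricted to $\mathbb{P}^2=\{w=0\}$ is $[x:y:z]\mapsto[c^{-1/2}x:c^{1/2}y:c^{-1/2}z]$; I normalise the lift by an overall factor $c^{1/2}$ to $\rho(c)=\operatorname{diag}(1,c,1)$, the choice matching conditions (1)--(4) (a different lift merely shifts all weights). Equivariance then reads $\tilde A(\rho(c)\hat X)=P^K(c)\,\tilde A(\hat X)\,P^H(c)^{-1}$ and $\tilde B(\rho(c)\hat X)=P^L(c)\,\tilde B(\hat X)\,P^K(c)^{-1}$.

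For the forward direction I would expand these identities in the three linear coordinates $x,y,z$ and match coefficients. Writing $P_c:=P^H(c)$, the $x$- and $y$-coefficients of the $A$-identity immediately force the first two block columns of $P^K(c)$ (relative to the splitting $K=\mathbb{C}^\kappa\oplus\mathbb{C}^\kappa\oplus\mathbb{C}^N$) to be $(P_c,0,0)$ and $(0,cP_c,0)$. The frame block is fixed by the trivialisation: over $\mathbb{P}^1_\infty$ the last summand $\mathbb{C}^N\subset K$ maps isomorphically onto $E$ and the circle acts there by $\lambda(c)$, while $P^K(c)$ is a single matrix independent of $X$; comparing $P^K(c)$ applied to a frame vector with $\lambda(c)$ modulo $\operatorname{im}\tilde A$ as $[x:y]$ varies forces the frame to be a $P^K$-invariant subspace with $P^K|_{\mathbb{C}^N}=\lambda(c)$, so $P^K(c)=\operatorname{diag}(P_c,cP_c,\lambda(c))$. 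Feeding this back, the $z$-coefficient of the $A$-identity gives exactly (1), (2) and (3), and the three coefficients of the $B$-identity give $P^L(c)=cP_c$ together with (4) (re-confirming (1) and (2)).

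The converse is a direct verification: given $P_c$ satisfying (1)--(4), I set $P^H(c)=P_c$, $P^K(c)=\operatorname{diag}(P_c,cP_c,\lambda(c))$ and $P^L(c)=cP_c$, check that each is a genuine homomorphism $\mathbb{C}^\times\to\mathrm{GL}$ (for $c\mapsto cP_c$ this uses that $P$ is a homomorphism and that scalars are central), and substitute into the coefficient equations to see that $\tilde A$ and $\tilde B$ are equivariant; hence the monad, and so $E$, is $\mathbb{C}^\times$-invariant.

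The step I expect to be the main obstacle is pinning down the frame block of $P^K$: showing that the framing makes $\mathbb{C}^N\subset K$ a genuine subrepresentation transforming by exactly $\lambda(c)$, rather than some representation mixing it with the other summands. The clean resolution is the ``constant matrix versus moving image'' argument above, which exploits that $P^K(c)$ does not depend on the base point while $\operatorname{im}\tilde A$ sweeps out a moving family over $\mathbb{P}^1_\infty$. The only other point requiring care is the half-integer lift ambiguity coming from the $c^{\pm1/2}$ in the $\mathbb{P}^3$-action; this is harmless and is absorbed once and for all by the normalisation $\rho(c)=\operatorname{diag}(1,c,1)$. Everything else is coefficient bookkeeping.
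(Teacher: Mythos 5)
Your proposal is correct and follows essentially the same route as the paper: express $\mathbb{C}^{\times}$-invariance as equivariance of the monad maps under $(\sigma,\rho,\sigma')\in\text{GL}_{HKL}$, match coefficients of $x$, $y$, $z$ to force $\rho=\text{diag}(P_{c},cP_{c},\lambda_{c})$ and $\sigma'=cP_{c}$, and read off conditions (1)--(4) from the $z$-coefficients. Your ``constant matrix versus moving image'' argument for why the frame block must be $\lambda(c)$ is a slightly more careful justification of a step the paper simply asserts (that the chosen basis of $K$ is preserved), but it is the same proof.
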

\begin{proof}
For the monopole to be $\mathbb{C}^{\times}$-invariant, the monad
maps need to be $\mathbb{C}^{\times}$-equivariant. There needs to
be an element $\left(\sigma,\rho,\sigma'\right)$ of $\text{GL}_{HKL}$
for which the maps $A_{X}$ and $B_{X}$ satisfy $\rho(c)A_{(x,y,z)}=A_{(x,cy,z)}\sigma(c)$
and $\sigma'(c)B_{(x,y,z)}=B_{(x,cy,z)}\rho(c)$. We can ask that
the choice of basis made for $K$ be preserved which means that $\rho(c)$
should split into blocks on the diagonal, $\text{diag}\left(\rho_{1},\rho_{2},\rho_{3}\right)\in\text{GL}(\kappa,\mathbb{C})\times\text{GL}(\kappa,\mathbb{C})\times\text{GL}(N,\mathbb{C})$.

The condition $A_{(x,cy,z)}=\rho(c)A_{(x,y,z)}\sigma^{-1}(c)$ in
this basis is 
\[
\left[\begin{array}{c}
x+z\alpha_{1}\\
y+z\alpha_{2}\\
za
\end{array}\right]\mapsto\left[\begin{array}{c}
x+z\alpha_{1}\\
cy+z\alpha_{2}\\
za
\end{array}\right]=\text{diag}\left(\rho_{1},\rho_{2},\rho_{3}\right)\left[\begin{array}{c}
x+z\alpha_{1}\\
y+z\alpha_{2}\\
za
\end{array}\right]\sigma^{-1}.
\]
Note that $x=\rho_{1}x\sigma^{-1}$ implies that $\rho_{1}=\sigma$
and $cy=\rho_{2}y\sigma^{-1}$ implies that $\rho_{2}=c\sigma$.

Likewise, $B_{(x,cy,z)}=\sigma'(c)B_{(x,y,z)}\rho^{-1}(c)$ in the
chosen basis reads as 
\[
\left[\begin{array}{ccc}
-cy-z\alpha_{2} & x+z\alpha_{1} & zb\end{array}\right]=\sigma'\left[\begin{array}{ccc}
-y-z\alpha_{2} & x+z\alpha_{1} & zb\end{array}\right]\text{diag}\left(\rho_{1}^{-1},\rho_{2}^{-1},\rho_{3}^{-1}\right).
\]
From the first two blocks, $-cy=-\sigma'y\rho_{1}^{-1}$ implies that
$c\rho_{1}=\sigma'$ and $x=\sigma'x\rho_{2}^{-1}$ implies that $\rho_{2}=\sigma'$.

Together, this means $\sigma=P_{c}=\rho_{1}$ and $\sigma'=cP_{c}=\rho_{2}$
for some $P_{c}\in\text{GL}(\kappa,\mathbb{C})$. Recall that the
last $N$ basis elements of $K$ provide the framing so $\rho_{3}$
needs to be the representation $\lambda_{c}$. Thus, the conditions
(1)-(4) of the theorem are exactly the conditions for the $\mathbb{C}^{\times}$-equivariance
of $A_{X}$ and $B_{X}$. 
\end{proof}
Thus we see that in the case of a circle invariant monopole, the $\mathbb{C}^{\times}$-action
on the monad's bundles is multiplication by 
\[
c\mapsto\text{diag}\left(P_{c},\text{diag}\left(P_{c},cP_{c},\lambda_{c}\right),cP_{c}\right)\in\text{GL}(H)\times\text{GL}(K)\times\text{GL}(L).
\]
The homomorphism $P_{c}$ is a representation of $\mathbb{C}^{\times}$
so we can diagonalise it. This means that $H$, $K$ and $L$ can
be decomposed into weight spaces for the $\mathbb{C}^{\times}$-action.
The ADHM data $\alpha_{1},\alpha_{2},a,b$ must then preserve these
weight spaces.

Austin and Braam \cite{key-3} found the weight space decomposition
for the $\text{SU}(2)$ case via the equivariant index theorem. In
the next section, we will see a calculation of the weight spaces for
any $\text{SU}(N)$. It is enough to compute the $\mathbb{C}^{\times}$-representation
$P_{c}$ over the fixed line $\mathbb{P}_{+}^{1}$ since this is enough
to determine the ADHM data $(\alpha_{1},\alpha_{2},a,b)$.


\section{A Chern Characters Calculation}

The starting point of the calculation is the following display (which
can be found in \cite{key-9}) for a monad

\begin{equation}
\xymatrix{ &  & 0\ar[d] & 0\ar[d]\\
0\ar[r] & \underline{H}\ar[r]\ar@{=}[d] & \ker B_{X}\ar[r]\ar[d] & E\ar[r]\ar[d] & 0\\
0\ar[r] & \underline{H}\ar[r]^{A_{X}} & \underline{K}\ar[r]\ar[d]_{B_{X}} & \text{coker}\, A_{X}\ar[r]\ar[d] & 0\\
 &  & \underline{L}\ar[d]\ar@{=}[r] & \underline{L}\ar[d]\\
 &  & 0 & 0
}
\label{eq:monad display}
\end{equation}
where the rows and columns are all exact.

The equivariant Chern character of $\mathbb{P}^{1}$ is a map $K_{\mathbb{C}^{\times}}(\mathbb{P}^{1})\rightarrow H_{\mathbb{C}^{\times}}^{\ast}(\mathbb{P}^{1})$,
from the equivariant K-theory to the equivariant cohomology of a space
$\mathbb{P}^{1}$. By the additivity of the Chern character, the right
vertical and bottom horizontal exact sequences of the display gives
us the following 
\[
\text{ch}(\text{coker }A_{X})=\text{ch}(E)+\text{ch}(\underline{L})
\]
\[
\text{ch}(\underline{K})=\text{ch}(\underline{H})+\text{ch}(\text{coker }A_{X})
\]
where $\text{ch}$ denotes the \textit{$\mathbb{C}^{\times}$-equivariant}
Chern character. Putting them together yields 
\begin{equation}
\text{ch}(E)=\text{ch}(\underline{K})-\text{ch}(\underline{H})-\text{ch}(\underline{L}).\label{eq: ch display}
\end{equation}
The upshot is that if we know the equivariant Chern character of the
holomorphic bundle $E$, we can compute the equivariant Chern character
of the monad vector spaces $H$,$K$ and $L$ over $\mathbb{P}_{+}^{1}$
and hence their $\mathbb{C}^{\times}$ weight decomposition. Concretely,
this data is encoded in the exponents of the matrix $P_{c}$ and will
induce a decomposition of the ADHM matrices.

Since the bundle $E$ is trivial over $\mathbb{P}_{+}^{1}$, we have
a representation of $\mathbb{C}^{\times}$ on the fibres which allows
us to compute the equivariant Chern character of $E\vert_{\mathbb{P}_{+}^{1}}$.
Over any $\mathbb{P}^{1}$, all holomorphic vector bundles split into
line bundles by the Birkoff-Grothendieck splitting principle \cite{key-9}.
The strategy is to localise to $\mathbb{P}_{+}^{1}$, split all the
relevant bundles and compute the exponents of $P_{c}$. Since the
ADHM matrices are constant, any conditions on them over any line will
hold globally.


\subsection{The bundle E}

~

For $\text{SU}(2)$, Atiyah showed that over $\mathbb{P}_{+}^{1}$,
$E=\mathcal{O}(k)\otimes\mathcal{L}^{-p}\oplus\mathcal{O}(-k)\otimes\mathcal{L}^{p}$
where $\mathcal{L}^{p}$ is the trivial line bundle with the $c^{p}$
representation of $\mathbb{C}^{\times}$ \cite{key-4}. This follows
from a result of equivariant K-theory that over a fixed point set
$M$, 
\[
K_{\mathbb{C}^{\times}}(M)=K(M)\otimes R(\mathbb{C}^{\times})
\]
where $R(\mathbb{C}^{\times})=\mathbb{Z}[u]$ is the ring of characters
of the representations of $\mathbb{C}^{\times}$ \cite{key-11}.

The $\mathbb{C}^{\times}$-representation on $E$ over $\mathbb{P}_{+}^{1}$
\[
c\mapsto\lambda(c)=\text{diag}\left(\begin{array}{ccc}
c^{p_{1}} & \ldots & c^{p_{N}}\end{array}\right)
\]
ordered $p_1<p_2<\ldots<p_N$ splits $E$ into a sum of line bundles. Since these line bundles are
algebraic, we invoke Birkhoff-Grothendieck {[}Okonek-Schneider-Spindler
1980{]} to see the unique splitting 
\[
E=\mathcal{O}(k_{1})\otimes\mathcal{L}^{p_{1}}\oplus\ldots\oplus\mathcal{O}(k_{N-1})\otimes\mathcal{L}^{p_{N-1}}\oplus\mathcal{O}\left(k_{N}\right)\mathcal{L}^{p_{N}}
\]
where $k_{N}=-(k_{1}+\ldots+k_{N-1})$ and $p_{N}=-(p_{1}+\ldots+p_{N-1})$.

Using results in \cite{key-4,key-12}, we calculate the equivariant
first Chern class and the total Chern class of $E$. The equivariant
first Chern class of a line bundle of the form $\mathcal{O}(k)\otimes\mathcal{L}^{p}$
is 
\[
c_{1}^{eq}=kx+pu
\]
where $x$ is the second degree generator of the usual $H^{2}(\mathbb{P}^{1})$
and $u$ is the first degree generator of $R(\mathbb{C}^{\times})$.

This is enough to calculate the equivariant Chern character 
\[
\text{ch}(E)=e^{k_{1}x+p_{1}u}+\ldots+e^{k_{N}x+p_{N}u}
\]
and since $H^{\ast}(\mathbb{P}^{1})=\mathbb{Z}[x]/\left\langle x^{2}\right\rangle $,
the following series expansion with respect to $x$ is exact 
\begin{equation}
\begin{split}\text{ch}(E)=\  & e^{p_{1}u}+\ldots+e^{p_{N}u}\\
 & +x\left(k_{1}e^{p_{1}u}+\ldots+k_{N}e^{p_{N}u}\right).
\end{split}
\end{equation}

The equivariant total Chern class of $E$ is given by 
\[
\prod_{i=1}^{N}(1+k_{i}x+p_{i}u)\quad\mod x^{2}.
\]
The localisation formula from Atiyah and Bott \cite{key-12} tells
us that the second Chern class $c_{2}$ (remember that $c_{1}(E)=0$)
can be found by looking at the coefficient of
$x$ and dividing it by $u$. This is \emph{positive} integer
\begin{equation}
c_{2}(E)=-\left[2\sum_{i=1}^{N-1}k_{i}p_{i}+\underset{{\scriptstyle i<j}}{\sum_{i=1}^{N-2}}\left(k_{i}p_{j}+k_{j}p_{i}\right)\label{eq:c2E}\right]
\end{equation}
which reduces to $2kp$ as expected for the $\text{SU}(2)$ case $p_1=-p$ which
is known.


\subsection{The main calculation} \hfill

Since the $x$-terms in the Chern character of $E$ only has terms
up to $e^{p_{1}u}$ and $e^{p_{N}u}$, the lowest weight of $P_{c}$
and highest weight of $cP_{c}$ are $c^{p_{1}}$ and $c^{p_{N}}$
respectively. This is required because for the $x$-terms, the lowest
weight term of $\underline{H}$ and the highest weight term of $\underline{L}$
do not cancel with any other terms on the right side of (\ref{eq: ch display})
and therefore must exactly match $x$-terms of $\text{ch}(E)$.

The homomorphism $P_{c}$ has the form

\[
\begin{array}{cc}
\text{diag} & \left(\begin{array}{cccccccccc}
c^{p_{1}} & \ldots & c^{p_{1}} & c^{p_{1}+1} & \ldots & c^{p_{1}+1} & \ldots & c^{p_{N}-1} & \ldots & c^{p_{N}-1}\end{array}\right)\\
 & \begin{array}{cccccccccc}
\longleftarrow & \chi_{p_{1}} & \longrightarrow & \longleftarrow & \chi_{p_{1}+1} & \longrightarrow & \ldots & \longleftarrow\  & \chi_{p_{N}-1} & \longrightarrow\end{array}
\end{array}
\]
and the $p_{N}-p_{1}$ numbers $\chi_{p_{1}},\ldots,\chi_{p_{N}-1}$
are what we need to calculate.

The vector bundles $\underline{H}$,$\underline{K}$ and $\underline{L}$
decompose as follows: 
\[
\underline{H}=\bigoplus_{i=p_{1}}^{p_{N}-1}\left(\mathcal{O}(-1)\otimes\mathcal{L}^{i}\right)^{\oplus\chi_{i}}
\]

\[
\underline{K}=\bigoplus_{i=p_{1}}^{p_{N}-1}\left(\mathcal{L}^{i}\right)^{\oplus\chi_{i}}\oplus\bigoplus_{i=p_{1}}^{p_{N}-1}\left(\mathcal{L}^{i+1}\right)^{\oplus\chi_{i+1}}\oplus\left(\mathcal{L}^{p_{1}}\oplus\ldots\oplus\mathcal{L}^{p_{N}}\right)
\]

\[
\underline{L}=\bigoplus_{i=p_{1}}^{p_{N}-1}\left(\mathcal{O}(1)\otimes\mathcal{L}^{i+1}\right)^{\oplus\chi_{i+1}}.
\]
Note that $\underline{K}$ has been arranged into the parts on which
the $\mathbb{C}^{\times}$-action is via $P_{c}$, $cP_{c}$ and $\lambda$
respectively.

\vspace{2cm}
\pagebreak

The corresponding equivariant Chern characters are: 
\[
\begin{split}\text{ch}(\underline{H}) & =\sum_{i=p_{1}}^{p_{N}-1}\chi_{i}e^{-x+iu}\\
 & =\sum_{i=p_{1}}^{p_{N}-1}\chi_{i}e^{iu}-x\left(\sum_{i=p_{1}}^{p_{N}-1}\chi_{i}e^{iu}\right)
\end{split}
\]

\begin{equation}
\begin{split}\text{ch}(\underline{K}) & =\sum_{i=p_{1}}^{p_{N}-1}\chi_{i}e^{iu}+\sum_{i=p_{1}}^{p_{N}-1}\chi_{i}e^{(i+1)u}+\left(e^{p_{1}u}+\ldots+e^{p_{N}u}\right)\\
 & =\chi_{p_{1}}e^{p_{1}u}+\sum_{i=p_{1}+1}^{p_{N}-1}(\chi_{i-1}+\chi_{i})e^{iu}+\chi_{p_{N-1}}e^{p_{N}u}+\left(e^{p_{1}u}+\ldots+e^{p_{N}u}\right)
\end{split}
\label{eq:chK}
\end{equation}

\[
\begin{split}\text{ch}(\underline{L}) & =\sum_{i=p_{1}}^{p_{N}-1}\chi_{i}e^{x+(i+1)u}\\
 & =\sum_{i=p_{1}}^{p_{N}-1}\chi_{i}e^{(i+1)u}+x\left(\sum_{i=p_{1}}^{p_{N}-1}\chi_{i}e^{(i+1)u}\right).
\end{split}
\]

We proceed by comparing coefficients. The $x$-terms are enough to
determine the unknowns $\chi_{p_{1}},\ldots,\chi_{p_{N}-1}$. 
\[
xe^{p_{1}u}:k_{1}=\chi_{p_{1}}
\]
\[
xe^{p_{N}u}:k_{N}=-\chi_{p_{N}}
\]
\[
xe^{p_{i}u},\text{ for }1<i\leq N-1:k_{i}=\chi_{p_{i}}-\chi_{p_{i}-1}
\]
and all the other $x$-terms require that $\chi_{j}=\chi_{j-1}$ when
$j\neq p_{i}$ for any of the $1\leq i\leq N$.

The interesting 1-terms are the ones of the form $e^{p_{i}u}$. The
rightmost terms of (\ref{eq:chK}) supply the 1-terms of $\text{ch}(E)$.
We expected to see this because in the monad, $\underline{K}$ carries
the trivialisation/framing data of $E$ in its last $N$ basis elements.
The rest of the 1-terms $\text{ch}(\underline{K})$ cancel with the
1-terms of $\text{ch}(\underline{H})$ and $\text{ch}(\underline{L})$
to show that they are consistent with the constraints set by the $x$-terms.

In the case of $\text{SU}(3)$, the weights run from $p_{1}$ to
$p_{2}$ with coefficients $\chi_{i}=k_{1}$ and then from $p_{2}$
to $-p_{1}-p_{2}$ with coefficients $\chi_{i}=k_{1}+k_{2}$ . At $p_{2}$,
the coefficient jumps from $\chi_{p_{2}-1}=k_{1}$ to $\chi_{p_{2}}=k_{1}+k_{2}$.
This is illustrated by the following diagram (which should be viewed
as an interval - the domain of an evolution equation) 
\[
\xymatrix{\underset{p_{1}}{\centerdot}\ar@{-}[rr]_{k_{1}}^{p_{2}-p_{1}} &  & \underset{p_{2}}{\centerdot}\ar@{-}[rr]_{k_{1}+k_{2}}^{-2p_{2}-p_{1}} &  & \underset{p_{3}=-p_{1}-p_{2}}{\centerdot}}
\]
where the quantity above the line is the number of distinct weights
with corresponding coefficient being the quantity under the line.
The dimensions of $P_{c}$ (as a square matrix) are given by 
\[
\left(p_{2}-p_{1}\right)k_{1}-\left(2p_{2}+p_{1}\right)\left(k_{1}+k_{2}\right)=-(2p_{1}k_{1}+2p_{2}k_{2}+p_{1}k_{2}+p_{2}k_{1})
\]
which is exactly the formula for the second Chern class $c_{2}(E)$
from the previous subsection.

In general, we have 
\[
\xymatrix{\underset{p_{1}}{\centerdot}\ar@{-}[rr]_{k_{1}}^{p_{2}-p_{1}} &  & \underset{p_{2}}{\centerdot} & \cdots & \underset{p_{N-2}}{\centerdot}\ar@{-}[rr]_{k_{1}+\ldots+k_{N-2}}^{p_{N-1}-p_{N-2}} &  & \underset{p_{N-1}}{\centerdot}\ar@{-}[rr]_{k_{1}+\ldots+k_{N-1}}^{p_{N}-p_{N-1}} &  & \underset{p_{N}}{\centerdot}}
\]
and this gives us the dimensions of $P_{c}$ 
\begin{equation}
\kappa=\sum_{i=1}^{N-1}\left[\left(p_{i+1}-p_{i}\right)\sum_{j=1}^{i}k_{j}\right].\label{eq:kappa}
\end{equation}

In \cite{key-10}, Norbury proved the $\text{SU}(2)$ case of the
following proposition by a different method.
\begin{prop}
The dimensions $\kappa\times\kappa$ of $P_{c}$ are given by $\kappa=c_{2}(E)$
for all $G=\text{SU}(N)$, $N\in\mathbb{N}_{\geq3}$.\end{prop}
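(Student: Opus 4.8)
The plan is to show that both the dimension formula \eqref{eq:kappa} and the second Chern class \eqref{eq:c2E} reduce to the single closed expression
\[
-\sum_{i=1}^{N}p_{i}k_{i},
\]
after which the proposition is immediate. Throughout I use the $\text{SU}(N)$ constraints $\sum_{i=1}^{N-1}k_{i}=-k_{N}$ and $\sum_{i=1}^{N-1}p_{i}=-p_{N}$ already recorded in the splitting of $E$.

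For the dimension, I would first introduce the partial sums $K_{i}=\sum_{j=1}^{i}k_{j}$, so that \eqref{eq:kappa} reads $\kappa=\sum_{i=1}^{N-1}(p_{i+1}-p_{i})K_{i}$, and then apply summation by parts. Reindexing the $p_{i+1}K_{i}$ term by $j=i+1$ and recombining it with the $p_{i}K_{i}$ term makes the interior terms telescope against $K_{i}-K_{i-1}=k_{i}$, leaving only the boundary contributions $p_{N}K_{N-1}-p_{1}K_{1}$ together with the bulk $-\sum_{i=2}^{N-1}p_{i}k_{i}$. Using $K_{1}=k_{1}$ and $K_{N-1}=-k_{N}$ then collapses this to $-\sum_{i=1}^{N}p_{i}k_{i}$.

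For the Chern class I would instead substitute the constraints directly into \eqref{eq:c2E}. Writing $-\sum_{i=1}^{N}p_{i}k_{i}=-\sum_{i=1}^{N-1}p_{i}k_{i}-p_{N}k_{N}$ and expanding $p_{N}k_{N}=\bigl(\sum_{i=1}^{N-1}p_{i}\bigr)\bigl(\sum_{j=1}^{N-1}k_{j}\bigr)$, I split the resulting double sum into its diagonal $\sum_{i}p_{i}k_{i}$ and off-diagonal $\sum_{i<j}(p_{i}k_{j}+p_{j}k_{i})$ parts. Collecting terms reproduces exactly the bracket $2\sum_{i=1}^{N-1}k_{i}p_{i}+\sum_{i<j}(k_{i}p_{j}+k_{j}p_{i})$ that appears in \eqref{eq:c2E}, confirming that $c_{2}(E)=-\sum_{i=1}^{N}p_{i}k_{i}$ as well, and hence $\kappa=c_{2}(E)$.

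Both computations are elementary, so there is no genuine analytic obstacle; the only point requiring care is the bookkeeping in the summation by parts, where the boundary terms must be tracked precisely and the $\text{SU}(N)$ relations invoked at exactly the right moment. The $\text{SU}(3)$ computation already displayed in the preceding subsection serves as a base case and a sanity check for the general telescoping identity.
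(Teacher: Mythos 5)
Your proposal is correct, and it takes a genuinely different route from the paper. The paper argues by induction on $N$: the $\text{SU}(3)$ computation is the base case, and the inductive step consists of computing the difference between the $N$ and $N-1$ instances of \eqref{eq:c2E} and checking that it matches the extra terms acquired by \eqref{eq:kappa} in passing from $N-1$ to $N$. You instead evaluate both sides in closed form, showing via summation by parts that $\kappa=\sum_{i=1}^{N-1}(p_{i+1}-p_{i})K_{i}$ telescopes to $-\sum_{i=1}^{N}p_{i}k_{i}$ (using $K_{1}=k_{1}$, $K_{N-1}=-k_{N}$), and via expansion of $p_{N}k_{N}=\bigl(\sum_{i<N}p_{i}\bigr)\bigl(\sum_{j<N}k_{j}\bigr)$ that \eqref{eq:c2E} equals the same expression. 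I have checked both computations and they are right; in particular the off-diagonal part of the expanded product reproduces exactly the $\sum_{i<j}(k_{i}p_{j}+k_{j}p_{i})$ term in \eqref{eq:c2E}, and the closed form correctly specializes to $2kp$ for $\text{SU}(2)$ and to $-(2p_{1}k_{1}+2p_{2}k_{2}+p_{1}k_{2}+p_{2}k_{1})$ for $\text{SU}(3)$. What your approach buys is a uniform, induction-free identity $\kappa=c_{2}(E)=-\sum_{i=1}^{N}p_{i}k_{i}$ valid for all $N\geq2$, which is both more memorable and makes the positivity/sign conventions easier to audit; what the paper's induction buys is that it only ever manipulates the increments between consecutive ranks, which fits the narrative of building up the weight diagram interval by interval. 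One minor remark: in your argument the $\text{SU}(3)$ case is only a sanity check, not a logically required base case, so you should not describe it as one.
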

\begin{proof}
We proceed by induction. The $\text{SU}(3)$ case above is our base
step. (For the $\text{SU}(2)$ case, it is compatible too; $c_{2}(E)=2kp=\kappa$.)

For the inductive step, we assume that the proposition holds for $\text{SU}(N-1)$.
The difference in (\ref{eq:c2E}) between the $N$ and $N-1$ cases
is 
\begin{multline*}
(p_{N-2}-p_{N-1})(k_{1}+\ldots+k_{N-2})-(2p_{N-1}+p_{N-2}+\ldots+p_{1})(k_{1}+\ldots+k_{N-1})\\
+(2p_{N-2}+p_{N-3}+\ldots+p_{1})(k_{1}+\ldots+k_{N-2})\\
=-p_{N-1}(k_{1}+\ldots+k_{N-2})-(2p_{N-1}+p_{N-2}+\ldots+p_{1})k_{N-1}
\end{multline*}
which is exactly the extra terms of $c_{2}(E)$ in (\ref{eq:kappa})
in going from $N-1$ to $N$. 
\end{proof}


\subsection{Discrete Nahm equations}

~

The preceding section proves that 
\begin{prop}
Let $E$ be a $\mathbb{C}^{\times}$-equivariant holomorphic vector
bundle on $\mathbb{P}^{3}$ ($\mathbb{C}^{\times}$-action $[x:y:z:w]\mapsto[c^{-1/2}x:c^{1/2}y:c^{-1/2}z:c^{1/2}w]$)
corresponding to a monopole with mass numbers $p_{1},\ldots,p_{N-1}\in\mathbb{Z}$
(or $\frac{1}{2}+\mathbb{Z}$ if $N$ is even) ordered $p_{1}<\ldots<p_{N-1}$,
and charge numbers $k_{1},\ldots,k_{N-1}\in\mathbb{Z}$. 

Then the
$\mathbb{C}^{\times}$ weight space decomposition of the monad
\[
\underline{H}\overset{A_{X}}{\rightarrow}\underline{K}\overset{B_{X}}{\rightarrow}\underline{L}
\]
restricted to $\mathbb{P}_{+}^{1}$ is 
\[
\underline{H}=\mathbb{C}_{p_{1}}^{k_{1}}\oplus\ldots\oplus\mathbb{C}_{p_{2}-1}^{k_{1}}\oplus\mathbb{C}_{p_{2}}^{k_{1}+k_{2}}\oplus\mathbb{C}_{p_{2}+1}^{k_{1}+k_{2}}\oplus\ldots\oplus\mathbb{C}_{p_{N}-1}^{-k_{N}}
\]
\[
\underline{K}=\mathbb{C}_{p_{1}}^{k_{1}+1}\oplus\mathbb{C}_{p_{1}+1}^{2k_{1}}\oplus\ldots\oplus\mathbb{C}_{p_{2}-1}^{2k_{1}}\oplus\mathbb{C}_{p_{2}}^{2\left(k_{1}+k_{2}\right)+1}\oplus\mathbb{C}_{p_{2}+1}^{2\left(k_{1}+k_{2}\right)}\oplus\ldots\oplus\mathbb{C}_{p_{N}-1}^{2\left(k_{1}+\ldots+k_{N-1}\right)}\oplus\mathbb{C}_{p_{N}}^{-k_{N}+1}
\]
\[
\underline{L}=\mathbb{C}_{p_{1}+1}^{k_{1}}\oplus\ldots\oplus\mathbb{C}_{p_{2}}^{k_{1}}\oplus\mathbb{C}_{p_{2}+1}^{k_{1}+k_{2}}\oplus\mathbb{C}_{p_{2}+2}^{k_{1}+k_{2}}\oplus\ldots\oplus\mathbb{C}_{p_{N}}^{-k_{N}}
\]
where the subscript denotes the weight of the $\mathbb{C}^{\times}$
representation on that component. The final mass and charge numbers
are defined $p_{N}=-\sum_{i=1}^{N-1}p_{i}$ and $k_{N}=-\sum_{i=1}^{N-1}k_{i}$
respectively. 
\end{prop}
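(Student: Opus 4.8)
The plan is to run the whole argument through the $\mathbb{C}^{\times}$-equivariant Chern character on the fixed line $\mathbb{P}_{+}^{1}$. By Proposition~\ref{prop: Cstar monad} the monad display~\eqref{eq:monad display} is $\mathbb{C}^{\times}$-equivariant, and since the Chern character $K_{\mathbb{C}^{\times}}(\mathbb{P}^{1})\to H^{\ast}_{\mathbb{C}^{\times}}(\mathbb{P}^{1})$ is additive over short exact sequences, its right vertical and bottom horizontal sequences combine into the single identity~\eqref{eq: ch display}, $\text{ch}(E)=\text{ch}(\underline{K})-\text{ch}(\underline{H})-\text{ch}(\underline{L})$. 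This is the relation on which everything turns.

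First I would fix the input side. Because $E$ is trivial over $\mathbb{P}_{+}^{1}$, Birkhoff--Grothendieck splits it as $\bigoplus_{i=1}^{N}\mathcal{O}(k_{i})\otimes\mathcal{L}^{p_{i}}$, so with $c_{1}^{eq}(\mathcal{O}(k)\otimes\mathcal{L}^{p})=kx+pu$ and the relation $x^{2}=0$ in $H^{\ast}(\mathbb{P}^{1})$ its equivariant Chern character is known exactly. Writing $\chi_{i}$ for the multiplicity of the weight $c^{i}$ in the representation $P_{c}$, the weight decompositions of $\underline{H}$, $\underline{K}$ and $\underline{L}$ prescribed by Proposition~\ref{prop: Cstar monad}---with $\underline{K}$ carrying the three blocks $P_{c}$, $cP_{c}$ and the framing block $\lambda$---give $\text{ch}(\underline{H})$, $\text{ch}(\underline{K})$ and $\text{ch}(\underline{L})$ as explicit functions of the unknowns $\chi_{i}$.

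The decisive step is to compare coefficients in~\eqref{eq: ch display}, and here the $x$-coefficients alone suffice. The point is that the trivial-line-bundle ($\mathcal{O}$) summands of $\underline{K}$ carry no $x$-term, so the $x$-part of the identity involves only the $\chi_{i}$; matching the coefficient of $xe^{iu}$ weight by weight yields $\chi_{p_{1}}=k_{1}$ and $k_{N}=-\chi_{p_{N}-1}$ at the two extreme weights, the jump relation $k_{i}=\chi_{p_{i}}-\chi_{p_{i}-1}$ at each intermediate mass number, and $\chi_{j}=\chi_{j-1}$ at every non-mass weight. The two edge relations are forced because the lowest weight of $\underline{H}$ and the highest weight of $\underline{L}$ cannot cancel against anything else on the right-hand side; this simultaneously confirms that the weights of $P_{c}$ run exactly from $p_{1}$ to $p_{N}-1$. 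Solving the recursion gives $\chi_{i}=k_{1}+\cdots+k_{j}$ for $p_{j}\le i<p_{j+1}$, and substituting these multiplicities back into the three decompositions produces the stated weight spaces.

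The part I expect to need the most care is the bookkeeping at the jump weights $p_{2},\ldots,p_{N-1}$: one must keep the framing contribution $\lambda$ (the last $N$ entries of $K$, which supply precisely the $e^{p_{i}u}$ terms of $\text{ch}(E)$) cleanly separated from the two shifted copies $P_{c}$ and $cP_{c}$, and verify that the degree-zero ($x^{0}$) part of~\eqref{eq: ch display} is then automatically satisfied rather than imposing extra constraints. A good consistency check, which I would carry out by the same induction on $N$ used earlier, is that $\dim P_{c}=\sum_{i=1}^{N-1}(p_{i+1}-p_{i})(k_{1}+\cdots+k_{i})$ from~\eqref{eq:kappa} coincides with $c_{2}(E)$ from~\eqref{eq:c2E}.
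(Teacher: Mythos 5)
Your proposal is correct and follows essentially the same route as the paper: it localises to $\mathbb{P}_{+}^{1}$, applies additivity of the equivariant Chern character to the monad display to get $\text{ch}(E)=\text{ch}(\underline{K})-\text{ch}(\underline{H})-\text{ch}(\underline{L})$, and solves for the multiplicities $\chi_{i}$ by matching $x$-coefficients weight by weight, with the $1$-terms serving only as a consistency check. The only difference is cosmetic: your edge relation $k_{N}=-\chi_{p_{N}-1}$ is in fact the correctly indexed version of what the paper writes as $k_{N}=-\chi_{p_{N}}$.
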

Note that anti-self-dual instantons have instanton charge $\kappa<0$
which constrains the allowed mass and charge numbers of a hyperbolic
monopole.

The conditions of Proposition \ref{prop: Cstar monad} imply that
the ADHM data $(\alpha_{1},\alpha_{2},a,b)$ for a magnetic monopole
only map between components of the same weight. Now I will describe
the form of the ADHM data $(\alpha_{1},\alpha_{2},a,b)$ which preserve
the above weight decomposition.


\begin{figure}[h]
\begin{tikzpicture}[>= latex,descr/.style={fill=white,inner sep=2.5pt}]  \matrix (m) [matrix of math nodes,row sep=.07em,column sep=12em,minimum width=2em] {
&\mathbb{C}_{-3}^{k_{1}}\\ 
&\mathbb{C}_{-2}^{k_{1}}\\ 
&\mathbb{C}_{-1}^{k_{1}+k_{2}}\\ 
&\mathbb{C}_{0}^{k_{1}+k_{2}}\\ 
&\mathbb{C}_{1}^{k_{1}+k_{2}}\\ 
&\mathbb{C}_{2}^{k_{1}+k_{2}}\\ 
\mathbb{C}_{-3}^{k_{1}}&\mathbb{C}_{3}^{k_{1}+k_{2}}&\mathbb{C}_{-2}^{k_{1}}\\ \mathbb{C}_{-2}^{k_{1}}&\mathbb{C}_{-2}^{k_{1}}&\mathbb{C}_{-1}^{k_{1}}\\ \mathbb{C}_{-1}^{k_{1}+k_{2}}&\mathbb{C}_{-1}^{k_{1}}&\mathbb{C}_{0}^{k_{1}+k_{2}}\\ \mathbb{C}_{0}^{k_{1}+k_{2}}&\mathbb{C}_{0}^{k_{1}+k_{2}}&\mathbb{C}_{1}^{k_{1}+k_{2}}\\ \mathbb{C}_{1}^{k_{1}+k_{2}}&\mathbb{C}_{1}^{k_{1}+k_{2}}&\mathbb{C}_{2}^{k_{1}+k_{2}}\\ \mathbb{C}_{2}^{k_{1}+k_{2}}&\mathbb{C}_{2}^{k_{1}+k_{2}}&\mathbb{C}_{3}^{k_{1}+k_{2}}\\ \mathbb{C}_{3}^{k_{1}+k_{2}}&\mathbb{C}_{3}^{k_{1}+k_{2}}&\mathbb{C}_{4}^{k_{1}+k_{2}}\\ 
&\mathbb{C}_{4}^{k_{1}+k_{2}}\\ &\mathbb{C}_{-3}\\ &\mathbb{C}_{-1}\\ &\mathbb{C}_{4}\\ 
}; 
\path[-stealth] 
(m-7-1.east) edge node[descr,sloped]{$\alpha_1$} (m-1-2.mid west) edge node[descr,sloped,right=6pt]{$a_{-3}$} (m-15-2.west) 
(m-8-1.east) edge (m-2-2.mid west) edge (m-8-2)  
(m-9-1.east) edge (m-3-2.mid west) edge (m-9-2) edge node[descr,sloped]{$a_{-1}$} (m-16-2.west) 
(m-10-1.east) edge (m-4-2.mid west) edge (m-10-2) 
(m-11-1.east) edge (m-5-2.mid west) edge (m-11-2) 
(m-12-1.east) edge (m-6-2.mid west) edge (m-12-2) (m-13-1.east) edge (m-7-2.mid west) edge node[descr,sloped]{$\alpha_2$} (m-13-2)  
(m-2-2.mid east) edge node[descr,sloped]{$\alpha_2$} (m-7-3.west) 
(m-3-2.mid east) edge (m-8-3.west) 
(m-4-2.mid east) edge (m-9-3.west) 
(m-5-2.mid east) edge (m-10-3.west) 
(m-6-2.mid east) edge (m-11-3.west) 
(m-7-2.mid east) edge (m-12-3.west) 
(m-8-2.mid east) edge (m-7-3.base west) 
(m-16-2.east) edge node[descr,sloped]{$b_{-1}$} (m-8-3.base west) 
(m-9-2) edge (m-8-3.185) 
(m-10-2) edge (m-9-3.base west) 
(m-11-2) edge (m-10-3.base west) 
(m-12-2) edge (m-11-3.base west) 
(m-13-2) edge (m-12-3.base west) (m-14-2) edge node[descr,sloped]{$\alpha_1$} (m-13-3) 
(m-17-2.east) edge node[descr,sloped]{$b_{4}$} (m-13-3);
\end{tikzpicture} 

\caption{The weight decomposition of the monad of an $\text{SU}(3)$ hyperbolic
monopole with $p_{1}=-3$ and $p_{2}=-1$ (hence $\kappa=7k_{1}+5k_{2}$).}
\end{figure}
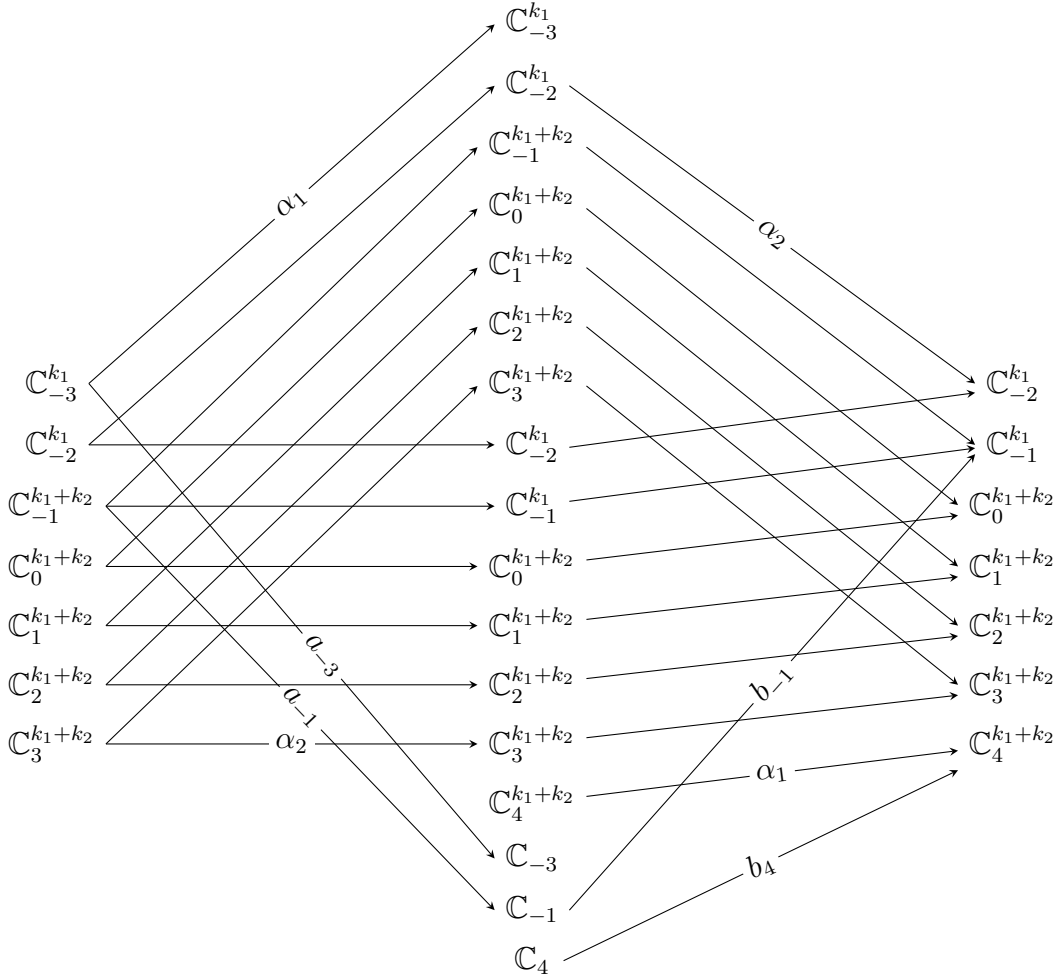

The matrix $\alpha_{1}$ is a sparse matrix with square blocks $\left\{ \beta_{i+1/2}\right\} ,\, p_{1}\leq i\leq p_{N}-1$
running down the diagonal of the indicated size. The matrix dimensions
increase from $\left(k_{1}+\ldots+k_{j-1}\right)\times\left(k_{1}+\ldots+k_{j-1}\right)$
to $\left(k_{1}+\ldots+k_{j}\right)\times\left(k_{1}+\ldots+k_{j}\right)$
at each $i=p_{j},\,2\leq j\leq N-1$. The subscripts of $\beta_{i+1/2}$,
$\gamma_{i}$, $a_{i}$ and $b_{i}$ indicate that they map between
spaces of weight $i$ of the $\mathbb{C}^{\times}$-action (between
$i$ and $i+1$ for the $\beta$s).


\begin{figure}
\begin{tikzpicture} 

\draw (0,0) rectangle (1,-1) 
            rectangle (2,-2) 
 (2.5,-2.5) rectangle (3.5,-3.5)
            rectangle (5,-5)
            rectangle (6.5,-6.5)
     (7,-7) rectangle (8.5,-8.5)
            rectangle (10,-10)
      (0,0) rectangle (10,-10);

\node[scale=0.8pt] at (0.5,-0.5) {$\beta_{p_1+\frac{1}{2}}$};
\node[scale=0.8pt] at (1.5,-1.5) {$\beta_{p_1+\frac{3}{2}}$};
\node[scale=0.8pt] at (2.25,-2.15) {$\ddots$};
\node[scale=0.8pt] at (3,-3) {$\beta_{p_i-\frac{1}{2}}$};
\node[scale=1pt] at (4.25,-4.25) {$\beta_{p_i+\frac{1}{2}}$};
\node[scale=1pt] at (5.75,-5.75) {$\beta_{p_i+\frac{3}{2}}$};
\node[scale=0.8pt] at (6.75,-6.65) {$\ddots$};
\node[scale=1pt] at (7.75,-7.75) {$\beta_{p_{N} -\frac{1}{2}}$};
\node[scale=1pt] at (9.25,-9.25) {$\beta_{p_{N}+\frac{1}{2}}$};
\node[scale=1.5] at (-1,-5.5) {$\alpha_1 =$};

\draw[decorate,decoration={brace,raise=6pt,amplitude=3pt}] (10,-0.1)--(10,-0.9);
\draw[decorate,decoration={brace,raise=6pt,amplitude=3pt}] (10,-1.1)--(10,-1.9);
\draw[decorate,decoration={brace,raise=6pt,amplitude=3pt}] (10,-2.6)--(10,-3.4);
\draw[decorate,decoration={brace,raise=6pt,amplitude=3pt}] (10,-3.6)--(10,-4.9);
\draw[decorate,decoration={brace,raise=6pt,amplitude=3pt}] (10,-5.1)--(10,-6.4);
\draw[decorate,decoration={brace,raise=6pt,amplitude=3pt}] (10,-7.1)--(10,-8.4);
\draw[decorate,decoration={brace,raise=6pt,amplitude=3pt}] (10,-8.6)--(10,-9.9);
\node at (11,-0.5) {$k_1$};
\node at (11,-1.5) {$k_1$};
\node at (11,-3) {$k_i-1$};
\node at (11,-4.25) {$k_i$};
\node at (11,-5.75) {$k_i$};
\node at (11,-7.75) {$-k_{N}$};
\node at (11,-9.25) {$-k_{N}$};

\end{tikzpicture}
\end{figure}


The sparse matrix $\alpha_{2}$ has (square except at transitions)
blocks $\{\gamma_{i}\},\,p_{1}+1\leq i\leq p_{N}-1$ along the super-diagonal.
At $i=p_{j},\,2\leq j\leq N-1$ , the diagonal block of zeros increases
in dimensions from $\left(k_{1}+\ldots+k_{j-1}\right)\times\left(k_{1}+\ldots+k_{j-1}\right)$
to $\left(k_{1}+\ldots+k_{j}\right)\times\left(k_{1}+\ldots+k_{j}\right)$.
The matrix $\gamma_{p_{j}}$ sitting in the transition is a \textit{rectangular}
matrix of dimensions $\left(k_{1}+\ldots+k_{j-1}\right)\times\left(k_{1}+\ldots+k_{j}\right)$.
The next matrix $\gamma_{p_{j}+1}$ returns to being a square block,
now of dimensions $\left(k_{1}+\ldots+k_{j}\right)\times\left(k_{1}+\ldots+k_{j}\right)$.

The $N\times\kappa$ matrix $a$ is divided by $P_{c}$ into columns
labelled by weight space. The non-zero entries are row vectors $\left\{ a_{1},\ldots,a_{N-1}\right\} $
in the columns with weight $p_{i}$, $1\leq i\leq N-1$ and $i$-th
rows of length $k_{1}+\ldots+k_{i}$. The last weight space of the
domain of $a$ correponding to the last $-k_{N}$ columns has weight
$p_{N}-1$.


\begin{figure}
\begin{tikzpicture} 

\draw (0,0) rectangle (1,-1) 
            rectangle (2,-2) 
 (2.5,-2.5) rectangle (3.5,-3.5)
            rectangle (5,-5)
            rectangle (6.5,-6.5)
     (7,-7) rectangle (8.5,-8.5)
            rectangle (10,-10)
      (1,0) rectangle (2,-1)
 (2.5,-1.5) rectangle (3.5,-2.5)
            rectangle (5,-3.5)
            rectangle (6.5,-6.5)
     (5,-5) rectangle (8,-6.5)
   (8.5,-7) rectangle (10,-8.5)
      (0,0) rectangle (10,-10);  

\node[scale=1pt] at (0.5,-0.5) {$0_{k_1}$};
\node[scale=1pt] at (1.5,-1.5) {$0_{k_1}$};
\node[scale=0.8pt] at (2.25,-2.15) {$\ddots$};
\node[scale=0.85pt] at (3,-3) {$0_{k_1}$};
\node[scale=1pt] at (4.25,-4.25) {$0_{k_1+k_2}$};
\node[scale=1pt] at (5.75,-5.75) {$0_{k_1+k_2}$};
\node[scale=0.8pt] at (6.75,-6.65) {$\ddots$};
\node[scale=1.5] at (-1,-5.5) {$\alpha_2 =$};
\node[scale=1pt] at (7.75,-7.75) {$0_{-k_{N}}$};
\node[scale=1pt] at (9.25,-9.25) {$0_{-k_{N}}$};

\node[scale=0.8pt] at (1.5,-0.5) {$\gamma_{p_1+1}$};
\node[scale=0.8pt] at (2.25,-1.15) {$\ddots$};
\node[scale=0.8pt] at (3,-2) {$\gamma_{p_i-1}$};
\node[scale=1pt] at (4.25,-3) {$\gamma_{p_i}$};
\node[scale=1pt] at (5.75,-4.25) {$\gamma_{p_i+1}$};
\node[scale=1pt] at (7.25,-5.75) {$\gamma_{p_i+2}$};
\node[scale=0.8pt] at (8.25,-6.65) {$\ddots$};
\node[scale=1pt] at (9.25,-7.75) {$\gamma_{p_{N}}$};

\draw[decorate,decoration={brace,raise=6pt,amplitude=3pt}] (10,-0.1)--(10,-0.9);
\draw[decorate,decoration={brace,raise=6pt,amplitude=3pt}] (10,-2.6)--(10,-3.4);
\draw[decorate,decoration={brace,raise=6pt,amplitude=3pt}] (10,-3.6)--(10,-4.9);
\draw[decorate,decoration={brace,raise=6pt,amplitude=3pt}] (10,-5.1)--(10,-6.4);
\draw[decorate,decoration={brace,raise=6pt,amplitude=3pt}] (10,-7.1)--(10,-8.4);
\draw[decorate,decoration={brace,raise=6pt,amplitude=3pt}] (10,-8.6)--(10,-9.9);
\node at (11,-0.5) {$k_1$};
\node at (11,-3) {$k_i-1$};
\node at (11,-4.25) {$k_i$};
\node at (11,-5.75) {$k_i$};
\node at (11,-7.75) {$-k_{N}$};
\node at (11,-9.25) {$-k_{N}$};

\end{tikzpicture}
\end{figure}



\begin{figure}
\begin{tikzpicture}

\draw (0,0) rectangle (1,-0.5)
   (2,-0.5) rectangle (3,-1)
     (4,-2) rectangle (5,-2.5)
   (6,-3.5) rectangle (7,-4)
     (8,-4) rectangle (9,-4.5);
\draw (0,0) rectangle (9,-4.5);

\draw[lightgray,ultra thin] (0,-0.5) -- (9,-0.5);
\draw[lightgray,ultra thin] (0,-1) -- (9,-1);
\draw[lightgray,ultra thin] (0,-2) -- (9,-2);
\draw[lightgray,ultra thin] (0,-2.5) -- (9,-2.5);
\draw[lightgray,ultra thin] (0,-3.5) -- (9,-3.5);
\draw[lightgray,ultra thin] (0,-4) -- (9,-4);

\node[scale=1.2] at (-1,-2.5) {$a =$};
\node at (3.5,-1.5) {$\ddots $};
\node at (5.5,-3) {$\ddots $};

\node[scale=0.9] at (0.5,-0.25) {$a_{p_1}$};
\node[scale=0.9] at (2.5,-0.75) {$a_{p_2}$};
\node[scale=0.9] at (4.5,-2.25) {$a_{p_i}$};
\node[scale=0.8] at (6.5,-3.75) {$a_{p_{\scriptscriptstyle{{N-1}}}}$};
\node[scale=0.8] at (8.5,-4.25) {$0_{p_{N}}$};

\draw[decorate,decoration={brace,raise=6pt,amplitude=3pt, mirror}] (0.1,-4.5)--(0.9,-4.5);
\draw[decorate,decoration={brace,raise=6pt,amplitude=3pt, mirror}] (2.1,-4.5)--(2.9,-4.5);
\draw[decorate,decoration={brace,raise=6pt,amplitude=3pt, mirror}] (4.1,-4.5)--(4.9,-4.5);
\draw[decorate,decoration={brace,raise=6pt,amplitude=3pt, mirror}] (6.1,-4.5)--(6.9,-4.5);
\node at (0.5,-5.3) {$k_1$};
\node at (2.5,-5.3) {$k_1+k_2$};
\node[scale=0.9pt] at (4.5,-5.3) {$k_1+\ldots+k_i$};
\node at (6.5,-5.3) {$-k_{N}$};

\end{tikzpicture}
\end{figure}


The $\kappa\times N$ matrix $b$ is divided into rows labelled by
weight space. The non-zero entries are column vectors $\left\{ b_{2},\ldots,b_{N}\right\} $
in the rows with weight $p_{i}$, $2\leq i\leq N-1$ and $p_{N}$,
and $i$-th columns of length $k_{1}+\ldots+k_{i-1}$. Note that the
first weight space of the image of $b$ corresponding to the first
$k_{1}$ rows has weight $p_{1}+1$.


\begin{figure}
\begin{tikzpicture}

\draw (0,0) rectangle (0.75,-1)
   (0.75,-1.5) rectangle (1.5,-2.5)
     (2.5,-3.5) rectangle (3.25,-4.5)
   (4.25,-5.5) rectangle (5.15,-6.5)
     (5.15,-7) rectangle (6,-8);
\draw (0,0) rectangle (6,-8);

\draw[lightgray,ultra thin] (0.75,0) -- (0.75,-8);
\draw[lightgray,ultra thin] (1.5,0) -- (1.5,-8);
\draw[lightgray,ultra thin] (2.5,0) -- (2.5,-8);
\draw[lightgray,ultra thin] (3.25,0) -- (3.25,-8);
\draw[lightgray,ultra thin] (4.25,0) -- (4.25,-8);
\draw[lightgray,ultra thin] (5.15,0) -- (5.15,-8);

\node[scale=1.2] at (-1,-4) {$b =$};
\node at (2,-3) {$\ddots $};
\node at (3.75,-5) {$\ddots $};

\node[scale=0.9] at (0.4,-0.5) {$0_{k_1}$};
\node[scale=0.9] at (1.15,-2) {$b_{p_2}$};
\node[scale=0.9] at (2.9,-4) {$b_{p_i}$};
\node[scale=0.75] at (4.7,-6) {$b_{p_{\scriptscriptstyle{N-1}}}$};
\node[scale=0.8] at (5.65,-7.5) {$b_{p_{N}}$};

\draw[decorate,decoration={brace,raise=6pt,amplitude=3pt}] (6,-1.6)--(6,-2.4);
\draw[decorate,decoration={brace,raise=6pt,amplitude=3pt}] (6,-3.6)--(6,-4.4);
\draw[decorate,decoration={brace,raise=6pt,amplitude=3pt}] (6,-5.6)--(6,-6.4);
\draw[decorate,decoration={brace,raise=6pt,amplitude=3pt}] (6,-7.1)--(6,-7.9);
\node at (6.75,-2) {$k_1$};
\node at (7.75,-4) {$k_1+\ldots+k_{i-1}$};
\node at (7.75,-6) {$k_1+\ldots+k_{N-2}$};
\node at (6.85,-7.5) {$-k_N$};

\end{tikzpicture}
\end{figure}


The complex equation (\ref{eq:complex ADHM}) is now a series of equations
in terms of the blocks $\left\{ \beta_{i+1/2}\right\} _{p_{1}\leq i\leq p_{N}-1}$
and $\left\{ \gamma_{j}\right\} _{p_{1}+1\leq j\leq p_{N}-1}$,

\begin{equation}
\begin{cases}
\beta_{i+\frac{1}{2}}\gamma_{i+1}-\gamma_{i+1}\beta_{i+\frac{3}{2}}+b_{i+1}a_{i+1}=0 & \text{for }i+1=p_{j},\,2\leq j\leq N-1\\
\beta_{i+\frac{1}{2}}\gamma_{i+1}-\gamma_{i+1}\beta_{i+\frac{3}{2}}=0 & \text{otherwise}
\end{cases}\label{eq:DN complex}
\end{equation}
which we call the complex discrete Nahm equations.

The real ADHM equation becomes the real discrete Nahm equations

\begin{equation}
\begin{cases}
\left[\beta_{i+\frac{1}{2}},\beta_{i+\frac{1}{2}}^{\ast}\right]+\gamma_{i+1}\gamma_{i+1}^{\ast}-\gamma_{i}^{\ast}\gamma_{i}-a_{i}^{\ast}a_{i}=0 & \text{when }i=p_{j},\,1\leq j\leq N-1\\
\left[\beta_{i+\frac{1}{2}},\beta_{i+\frac{1}{2}}^{\ast}\right]+\gamma_{i+1}\gamma_{i+1}^{\ast}-\gamma_{i}^{\ast}\gamma_{i}+b_{i+1}b_{i+1}^{\ast}=0   & \text{when }i+1=p_{j},\,2\leq j\leq N \\ 
\left[\beta_{i+\frac{1}{2}},\beta_{i+\frac{1}{2}}^{\ast}\right]+\gamma_{i+1}\gamma_{i+1}^{\ast}-\gamma_{i}^{\ast}\gamma_{i}=0 & \text{otherwise}
\end{cases}
\end{equation}
where $\gamma_{p_{1}}=0=\gamma_{p_{N}}$ so the first real equation
is 
\[
\left[\beta_{p_{1}+\frac{1}{2}},\beta_{p_{1}+\frac{1}{2}}^{\ast}\right]+\gamma_{p_{1}+1}\gamma_{p_{1}+1}^{\ast}-a_{p_{1}}^{\ast}a_{p_{1}}=0
\]
and the last one is 
\[
\left[\beta_{p_{N}-\frac{1}{2}},\beta_{p_{N}-\frac{1}{2}}^{\ast}\right]+b_{p_{N}+\frac{1}{2}}b_{p_{N}+\frac{1}{2}}^{\ast}-\gamma_{p_{N}-1}^{\ast}\gamma_{p_{N}-1}=0.
\]

\begin{defn}
A solution of the $(N-1)$-interval discrete Nahm equations of type
$(p_{1},\ldots,p_{N-1};k_{1},\ldots,k_{N-1})\in\mathbb{Z}^{2(N-1)}$
is a equivalence class of matrices 
\[
\left(\{\beta_{j}\},\{\gamma_{j}\},\{a_{p_{i}}\},\{b_{p_{i}}\}\right)
\]
labeled by half-integer points on an interval $j\in[p_{1},p_{N}]$
as shown 


\begin{figure}[H]
\begin{tikzpicture}[>= latex,descr/.style={fill=white,inner sep=2.5pt}] 
\draw[-] (0,0) -- (4,0) (5,0) -- (9,0) (10,0) -- (12,0);
\draw[loosely dotted, thick] (4,0) -- (5,0) (9,0) -- (10,0);

\foreach \x/\xtext in {0/{p_1}, 1/, 2/{p_1+1}, 3/, 4/{p_1+2}, 5/{p_2-1}, 6/, 7/{p_2}, 8/, 9/{p_2+1},10/{p_N-1},11/,12/{p_N}}
\draw[shift={(\x,0)}] (0pt,2pt) -- (0pt,-2pt) node[below] {$\scriptstyle{\xtext}$};

\foreach \x/\xtext in {0/a, 1/\beta, 2/\gamma, 3/\beta, 4/\gamma, 5/\gamma, 6/\beta, 7/{b,\gamma, a}, 8/\beta, 9/\gamma,10/\gamma,11/\beta,12/b}
\draw[shift={(\x,0)}] (0pt,10pt) -- (0pt,10pt) node[above] {$\xtext$};
\end{tikzpicture}
\end{figure}


\begin{flushleft} with dimensions $(k_{1}+\ldots+k_{i})\times(k_{1}+\ldots+k_{i})$
 at half integer points on an interval $(p_{i},p_{i+1})$ and at a boundary point $p_{i}$
between intervals, the matrices $a_{p_{i}}$ , $\gamma_{p_{i}}$
and $b_{p_{i}}$ have dimensions $1\times(k_{1}+\ldots+k_{i})$,
$(k_{1}+\ldots+k_{i-1})\times(k_{1}+\ldots+k_{i})$ and $(k_{1}+\ldots+k_{i-1})\times1$
respectively. The matrices must satisfy the $(N-1)$-interval discrete
Nahm equations and satisfy the equivalence relation (``gauge transformations'') \end{flushleft}
\vspace{0.5cm} 
\[
\begin{aligned}\beta_{j} & \sim g_{j}\beta_{j}g_{j}^{-1}\\
\gamma_{j} & \sim g_{j-\frac{1}{2}}\gamma_{j}g_{j+\frac{1}{2}}\\
a_{p_{i}} & \mapsto\lambda_{p_{i}}a_{p_{i}}g_{p_{i}+\frac{1}{2}}^{-1}\\
b_{p_{i}} & \mapsto g_{p_{i}-\frac{1}{2}}b_{p_{i}}\lambda_{p_{i}}^{-1}
\end{aligned}
\]
\vspace{0.5cm} 
where $g_{j}\in\text{U}(k_{1}+\ldots+k_{i})$ when $j\in(p_{i},p_{i+1})$. 
\end{defn}

Thus is our first main theorem proven: 
\begin{thm}
There is an equivalence between

\vspace{0.5cm}

\begin{minipage}[t]{1\columnwidth}%
\begin{enumerate}
\item framed $\text{SU}(N)$ monopoles $(A,\phi)$ on hyperbolic space $H^{3}$
of mass $(p_{i},\ldots,p_{N-1})\in\mathbb{Z}^{N-1}$ (or $\left(\frac{1}{2}+\mathbb{Z}\right)^{N-1}$
for $N$ even) and charge $(k_{1},\ldots,k_{N-1})\in\mathbb{Z}^{N-1}$,
and 

\vspace{0.5cm}

\item solutions of the $(N-1)$-interval discrete Nahm equations of type
$(p_{1},\ldots,p_{N-1};k_{1},\ldots,k_{N-1})$.\end{enumerate}
\end{minipage} 
\end{thm}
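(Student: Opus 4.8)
The plan is to exhibit the stated equivalence as the composite of the correspondences assembled in the preceding sections, and then to check that this composite is a bijection descending to gauge-equivalence classes on both sides. Starting from a framed $\text{SU}(N)$ hyperbolic monopole, Atiyah's theorem realises it as an $S^{1}$-invariant anti-self-dual instanton on $\mathbb{R}^{4}$; the Penrose--Ward transform together with Donaldson's theorem then produces a $\mathbb{C}^{\times}$-equivariant holomorphic bundle $E$ on $\mathbb{P}^{2}\subset\mathbb{P}^{3}$ with a fixed trivialisation over $\mathbb{P}_{\infty}^{1}$, and the monad construction encodes $E$ in ADHM data $(\alpha_{1},\alpha_{2},a,b)$ obeying the complex equation \eqref{eq:complex ADHM} and the real ADHM equation, modulo $\text{U}(\kappa)\times\text{U}(N)$. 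All of this is standard input.

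The substantive content is the extraction of the discrete Nahm data. Because the instanton is $S^{1}$-invariant, Proposition \ref{prop: Cstar monad} furnishes a homomorphism $P_{c}:\mathbb{C}^{\times}\rightarrow\text{GL}(\kappa,\mathbb{C})$ intertwining the ADHM data with the framing representation $\lambda$. Diagonalising $P_{c}$ decomposes $H$, $K$, $L$ into $\mathbb{C}^{\times}$-weight spaces, and the Chern character computation of the previous subsection fixes these multiplicities exactly (the integers $\chi_{i}$, hence $\kappa=c_{2}(E)$). Conditions (1)--(4) of Proposition \ref{prop: Cstar monad} then force $\alpha_{1},\alpha_{2},a,b$ to respect the weights in the prescribed way, so they assume the explicit block form displayed above, whose blocks are precisely $\{\beta_{i+1/2}\}$, $\{\gamma_{i}\}$, $\{a_{p_{i}}\}$, $\{b_{p_{i}}\}$. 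Substituting this block form into \eqref{eq:complex ADHM} and the real ADHM equation reproduces, block by block, the complex and real discrete Nahm equations \eqref{eq:DN complex}. This defines the forward map.

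For the reverse map I would run the construction backwards: given a solution of the $(N-1)$-interval discrete Nahm equations of the prescribed type, I assemble the blocks into matrices $\alpha_{1},\alpha_{2},a,b$ of the displayed shape, which are automatically $\mathbb{C}^{\times}$-equivariant for the $P_{c}$ read off from the dimension data, and I check that the discrete Nahm equations are exactly the assertion that these satisfy \eqref{eq:complex ADHM} and the real ADHM equation. One then recovers a $\mathbb{C}^{\times}$-invariant monad, and reversing the Donaldson and Penrose--Ward correspondences returns an $S^{1}$-invariant instanton, hence a framed hyperbolic monopole. Finally I would verify that the two notions of gauge equivalence match: the subgroup of $\text{U}(\kappa)\times\text{U}(N)$ commuting with $P_{c}$, equivalently preserving the $\mathbb{C}^{\times}$-weight decomposition, is precisely the product of the unitary factors $g_{j}$ and framing factors $\lambda_{p_{i}}$ appearing in the definition, acting exactly as stated there, so that the forward and reverse maps are mutually inverse on gauge classes.

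The main obstacle will be the regularity of the reverse direction: one must confirm that the reassembled ADHM data genuinely defines a monad, i.e. that $A_{X}$ is injective and $B_{X}$ surjective for every $X\in\mathbb{P}^{2}$, so that the cohomology $E_{X}=\ker B_{X}/\text{im }A_{X}$ is an honest vector bundle of the correct rank and Chern class (an instanton) rather than a sheaf. These nondegeneracy conditions are not themselves among the discrete Nahm equations and must be shown to follow from them together with the reality condition; establishing this---and thereby that \emph{every} discrete Nahm solution of the given type arises from an actual monopole---is where the argument requires the most care.
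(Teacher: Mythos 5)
Your proposal follows essentially the same route as the paper: the theorem is obtained as the composite of Atiyah's circle-invariance correspondence, the Penrose--Ward/Donaldson monad construction, the equivariance conditions of Proposition \ref{prop: Cstar monad}, and the Chern-character weight computation, with the complex and real discrete Nahm equations read off block-by-block from the ADHM equations and the gauge groups matched through the subgroup commuting with $P_{c}$. Your closing concern about the nondegeneracy of the reassembled monad in the reverse direction (injectivity of $A_{X}$, surjectivity of $B_{X}$) is a legitimate point of care, but the paper does not address it either --- it simply declares the theorem proven once the block decomposition and the resulting equations are established.
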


\bigskip{}

\pagebreak


\section{The rational map}

Atiyah \cite{key-5} showed that:
\begin{thm}[Atiyah] 
 For a compact classical group $G$, the moduli space of circle-invariant
instantons or equivalently, hyperbolic monopoles of charge \\ $\boldsymbol{k}=\left(k_{1},\ldots,k_{N}\right)$
is isomorphic to the space of degree $\boldsymbol{k}$ ``rational
maps'' 
\[
f:\mathbb{P}^{1}\rightarrow G/T
\]
where $T$ is a maximal torus. 
\end{thm}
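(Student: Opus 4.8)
The plan is to realise the isomorphism through the twistor correspondence already set up, constructing the rational map by ``scattering'' the $\mathbb{C}^{\times}$-weight flag across the orbits of the circle action and then exhibiting an inverse. I shall work with $G=\text{SU}(N)$, so that $G/T$ is the full flag manifold of $\mathbb{C}^{N}$, Pl\"ucker-embedded in the product $\prod_{i=1}^{N-1}\mathbb{P}(\wedge^{i}\mathbb{C}^{N})$; its second homology is $\mathbb{Z}^{N-1}$, and the multidegree of a map $f:\mathbb{P}^{1}\rightarrow G/T$ is recorded by the $N-1$ integers obtained by composing with the projections to the Grassmannian factors. The assertion ``degree $\boldsymbol{k}$'' is then that this multidegree equals the charge vector $(k_{1},\ldots,k_{N-1})$, with $k_{N}=-\sum k_{i}$ determined by $\text{SU}(N)$.

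First I would construct the forward map. Recall that the mini-twistor space is $Q=\mathbb{P}^{1}\times\mathbb{P}^{1}$, the space of $\mathbb{C}^{\times}$-orbits in $\mathbb{P}^{3}-\mathbb{P}_{+}^{1}\cup\mathbb{P}_{-}^{1}$, and that each orbit closure is a $\mathbb{P}^{1}$ joining a point of $\mathbb{P}_{+}^{1}$ to a point of $\mathbb{P}_{-}^{1}$. Over the fixed line $\mathbb{P}_{+}^{1}$ the weight decomposition established in the previous section splits $E$ into the ordered line bundles $\mathcal{O}(k_{i})\otimes\mathcal{L}^{p_{i}}$; ordering by the weights $p_{1}<\cdots<p_{N}$ produces a canonical full flag $0\subset F_{1}\subset\cdots\subset F_{N-1}\subset E$, where $F_{i}$ is spanned by the summands of weight at most $p_{i}$. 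Holomorphic scattering of this flag along each orbit to the framed fibre at $\mathbb{P}_{-}^{1}$ — equivalently, the comparison of the two weight flags at the endpoints of the orbit — assigns to each orbit a point of $G/T$, and as the orbit varies over one ruling of $Q$ this defines $f:\mathbb{P}^{1}\rightarrow G/T$. Restricting $E$ to the orbit rulings shows that the splitting exponents $k_{i}$ control the multidegree, so $\deg f=\boldsymbol{k}$.

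For the explicit form, which is the route I would actually take, I would package the forward map through the discrete Nahm data: the blocks $\{\beta_{j}\}$ define a discrete linear evolution across the interval $[p_{1},p_{N}]$ with the $a_{p_{i}}$ as inputs and the $b_{p_{i}}$ as outputs, and the ``transfer function'' of this system — a matrix of rational functions in a spectral variable built from the resolvents $(z-\beta_{j})^{-1}$ and the boundary vectors — reproduces $f$. Because the matrix dimensions jump by $k_{i}$ at each boundary $p_{i}$, this transfer function factors through the two-term partial flags mentioned in the outline, recovering the $N-1$ component maps.

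The inverse is where the real work lies, and I expect it to be the principal obstacle. Given a degree-$\boldsymbol{k}$ rational map $f$ (based at the flag fixed by the framing), one must reconstruct a holomorphic bundle on $Q$ with the prescribed splitting type on the two rulings and, crucially, carrying the real structure descended from $J$, so that the Penrose--Ward transform returns a genuine unitary anti-self-dual connection. The delicate points are (i) verifying that the reconstructed monad maps $A_{X}$, $B_{X}$ retain the nondegeneracy ($A_{X}$ injective, $B_{X}$ surjective) that guarantees the cohomology is a bundle rather than a sheaf, and (ii) matching the reality condition on $f$, inherited from the antipodal involution on the orbit lines, with the hermiticity of the discrete Nahm data. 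Once the forward and reconstruction procedures are shown to be mutually inverse and holomorphic in the moduli, a comparison of complex dimensions — the instanton charge $\kappa=c_{2}(E)$ of \eqref{eq:kappa} against the dimension of the space of based degree-$\boldsymbol{k}$ maps into $G/T$ — confirms that the bijection is an isomorphism of complex manifolds.
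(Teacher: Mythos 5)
The first thing to note is that the paper does not prove this statement at all: it is quoted as Atiyah's theorem and cited to \cite{key-4,key-5}, and the surrounding section only \emph{uses} it (via its corollary) to justify writing down explicit formulae for the rational map in terms of discrete Nahm data. So there is no internal proof to measure your attempt against; what can be compared is your outline versus Atiyah's actual argument. Atiyah's route is genuinely different from yours: he first identifies framed charge-$\kappa$ instantons on $S^{4}$ with based holomorphic maps $\mathbb{P}^{1}\rightarrow\Omega G$ of degree $\kappa$ into the loop group, and then observes that circle-invariance forces such a map to take values in the finite-dimensional orbit $G\cdot\lambda\subset\Omega G$ of the homomorphism $\lambda:S^{1}\rightarrow T$ determined by the mass weights, which for distinct weights is $G/T$; both directions of the bijection are inherited from the instanton--loop-group correspondence rather than built by hand from monads. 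Your forward construction (the weight flag over $\mathbb{P}_{+}^{1}$ scattered along $\mathbb{C}^{\times}$-orbits and compared with the framing) is essentially what the paper itself does later, in the Lemma on the sub-bundles $L_{1}^{+}\subset\cdots\subset L_{N-1}^{+}$ and the Proposition giving the $r_{j}(x)$; one caution is that only one of the two weight filtrations (by increasing or by decreasing weight) extends holomorphically off the fixed line under the $c\rightarrow0$ flow, so ``spanned by the summands of weight at most $p_{i}$'' must be checked against the direction of the flow.

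The genuine gap is the inverse. You correctly identify it as ``where the real work lies'' but then only list the difficulties (nondegeneracy of the reconstructed monad, compatibility of the reality condition on $f$ with the hermiticity of the Nahm data) without resolving them, and the closing appeal to a dimension count cannot substitute for exhibiting the inverse: equality of dimensions shows at best that an injective immersion is a local isomorphism, not that the forward map hits every based degree-$\boldsymbol{k}$ map. Likewise the claim that ``restricting $E$ to the orbit rulings shows that the splitting exponents $k_{i}$ control the multidegree'' is asserted, not proved; the multidegree is computed by pairing $f^{\ast}$ of the generators of $H^{2}(G/T)$ with the fundamental class, and identifying these pullbacks with the first Chern classes of the summands $\mathcal{O}(k_{i})$ in the splitting over $\mathbb{P}_{+}^{1}$ requires an argument (it is essentially the content of the Chern character computation of Section 3). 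As it stands your proposal is a reasonable plan for one direction plus an honest list of what remains, but it is not yet a proof of the stated isomorphism.
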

When $G=\text{SU}(N)$, $G/T=\text{Fl}_{\text{full}}(N)=\left\{ 0\subset\mathbb{C}\subset\mathbb{C}^{2}\subset\ldots\subset\mathbb{C}^{N}\right\} $,
the manifold of full flags in $N$-dimensional space. For magnetic
monopoles, we have the following corollary. 
\begin{cor}
There is an isomorphism between the moduli of framed $\text{SU}(N)$
magnetic monopoles on $H^{3}$ and the moduli of degree $(k_{1},k_{1}+k_{2},\ldots,k_{1}+\ldots+k_{N-1})$
rational maps such that $f(\infty)=\boldsymbol{0}$, 
\[
f:\mathbb{P}^{1}\rightarrow\text{Fl}_{\text{full}}(N).
\]

\end{cor}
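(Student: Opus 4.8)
The plan is to obtain the corollary as a specialisation of Atiyah's theorem to $G=\text{SU}(N)$, the only genuine content being three points: (i) the identification of $G/T$ with the full flag manifold, (ii) the translation of Atiyah's charge label $\boldsymbol{k}$ into the multidegree of a flag-valued rational map, and (iii) the identification of the framing of the monopole with the based condition $f(\infty)=\boldsymbol{0}$. For $G=\text{SU}(N)$ the maximal torus $T$ is the diagonal $\text{U}(1)^{N-1}$, and $\text{SU}(N)/T$ is canonically the variety $\text{Fl}_{\text{full}}(N)$ of full flags $0\subset V_{1}\subset\cdots\subset V_{N-1}\subset\mathbb{C}^{N}$; this identification is standard. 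Atiyah's theorem therefore already supplies an isomorphism of moduli spaces, and what remains is to pin down the degree and the basepoint.

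To read off the multidegree I would use the projections $\pi_{i}\colon\text{Fl}_{\text{full}}(N)\to\text{Gr}(i,N)$ sending a flag to its $i$-dimensional step. Since $H_{2}(\text{Fl}_{\text{full}}(N);\mathbb{Z})\cong\mathbb{Z}^{N-1}$ is detected by the Plücker classes pulled back along the $\pi_{i}$, the degree of $f$ is the tuple $(d_{1},\ldots,d_{N-1})$ with $d_{i}=\deg f^{\ast}\pi_{i}^{\ast}\mathcal{O}_{\text{Gr}(i,N)}(1)=\deg(\pi_{i}\circ f)$. The crucial step is to compute each $d_{i}$ from the holomorphic data of the monopole. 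Over the fixed line $\mathbb{P}_{+}^{1}$ the bundle $E$ splits as $\bigoplus_{i=1}^{N}\mathcal{O}(k_{i})\otimes\mathcal{L}^{p_{i}}$ with weights ordered $p_{1}<\cdots<p_{N}$, so the $i$-dimensional step $V_{i}$ of the flag is spanned by the $i$ eigenlines of lowest weight, of total summand degree $k_{1}+\cdots+k_{i}$. Because the Plücker line bundle restricts to $\det V_{i}^{\ast}$ along the flag, this yields $d_{i}=k_{1}+\cdots+k_{i}$, which is the stated partial-sum degree. The conceptual point is that the charges $k_{i}$ live in the summand (root-type) basis, whereas the degree of a flag-valued map is naturally expressed in the fundamental-weight basis, and the change of basis is exactly summation.

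For the basepoint, the framing of the instanton is a trivialisation of $E$ over the fibre $\mathbb{P}_{\infty}^{1}$ at the distinguished point $\{\infty\}\in\partial H^{3}$ singled out earlier by the choice of $\mathbb{P}^{2}$. Under this trivialisation the weight ordering $p_{1}<\cdots<p_{N}$ determines a distinguished ordered basis and hence the standard flag; evaluating $f$ at the corresponding point of $\mathbb{P}^{1}$ therefore returns this flag, which is the basepoint $\boldsymbol{0}$. Passing from unframed to framed monopoles thus corresponds exactly to passing from free to based rational maps, and Atiyah's isomorphism descends to one between the framed moduli and the based maps with $f(\infty)=\boldsymbol{0}$. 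The main obstacle I expect is step (ii): carefully matching orientation and sign conventions so that the filtration direction of the flag agrees with the increasing weight order, and so that Atiyah's normalisation of $\boldsymbol{k}$ genuinely transforms into the partial sums $k_{1}+\cdots+k_{i}$ rather than some other linear combination. Once the distinguished point at infinity and the ordered weight basis are fixed, the framing/basing identification in step (iii) is routine.
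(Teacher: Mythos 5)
Your proposal is correct and follows essentially the same route as the paper, which in fact states the corollary without proof as an immediate specialisation of Atiyah's theorem via the identification $\text{SU}(N)/T\cong\text{Fl}_{\text{full}}(N)$. The details you supply (reading the multidegree as partial sums $k_{1}+\cdots+k_{i}$ from the splitting of $E$ over $\mathbb{P}_{+}^{1}$, and matching the framing with the basepoint condition $f(\infty)=\boldsymbol{0}$) are consistent with the paper's later lemma on the $\mathbb{C}^{\times}$-invariant sub-bundles $L_{1}^{+}\subset\cdots\subset L_{N-1}^{+}$, up to the indexing/orientation convention for the flag that you already flag as the point needing care.
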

Along the lines of Braam and Austin \cite{key-3}, I will derive an
explicit formula for the rational map of a hyperbolic monopole in
terms of its discrete Nahm boundary data. To do this, restrict the
bundle to the projective plane $\mathbb{P}^{2}=\{[x:y:z:0]\in\mathbb{P}^{3}\}$.
Over this $\mathbb{P}^{2}$, the solutions of the discrete Nahm equations
have a $\text{GL}(\boldsymbol{k},\mathbb{C})$ freedom. We first require
two lemmas of Braam and Austin whose conditions are satisfied in our
case. 
\begin{lem}[Braam-Austin 4.2]
 \label{lem:BA 4.2}If $(\{\gamma_{i}\},\{\beta_{i}\},\{a_{p_{j}}\},\{b_{p_{j+1}}\})$
lies in a stable orbit then the $\gamma_{i}$ are all injective. 
\end{lem}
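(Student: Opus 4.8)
The plan is to convert the stability hypothesis into the monad nondegeneracy statement recalled in Section 2: when the ADHM data lies in a stable orbit, the map $A_X$ is injective for every $X=[x:y:z]\in\mathbb{P}^2$. I will also use the weight grading established in the previous section, where $\alpha_1$ is block diagonal with blocks $\beta_{i+\frac12}$ acting on the weight-$i$ space, $\alpha_2$ lowers the weight by one with blocks $\gamma_i$, and $a$ is supported only on the weight spaces of weights $p_1,\ldots,p_{N-1}$. The strategy is the usual one for Nahm-type data: assume some $\gamma_{i_0}$ has a nonzero kernel, manufacture a $\beta$-eigenvector inside that kernel, and use it to contradict the injectivity of $A_X$.

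First I would show that $\ker\gamma_{i_0}$ is invariant under the diagonal block $\beta_{i_0+\frac12}$. Away from the transition weights the complex discrete Nahm equation \eqref{eq:DN complex} reads $\beta_{i_0-\frac12}\gamma_{i_0}=\gamma_{i_0}\beta_{i_0+\frac12}$, so for $v\in\ker\gamma_{i_0}$ one has $\gamma_{i_0}\beta_{i_0+\frac12}v=\beta_{i_0-\frac12}\gamma_{i_0}v=0$, i.e. $\beta_{i_0+\frac12}v\in\ker\gamma_{i_0}$. Restricting the endomorphism $\beta_{i_0+\frac12}$ to the nonzero finite-dimensional complex space $\ker\gamma_{i_0}$ produces an eigenvector $v\neq 0$ with $\beta_{i_0+\frac12}v=\mu v$ for some $\mu\in\mathbb{C}$. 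For this $v$, which lives in the weight-$i_0$ space, one has $\alpha_1 v=\mu v$ and $\alpha_2 v=\gamma_{i_0}v=0$; and when $i_0$ is not one of the $p_j$ one also has $av=0$, since $a$ is supported away from that weight space. Then $A_X v$ vanishes at the point $[-\mu:0:1]\in\mathbb{P}^2$ (top slot $(x+z\alpha_1)v=(-\mu+\mu)v=0$, middle slot $(y+z\alpha_2)v=0$, bottom slot $za v=0$), contradicting the injectivity of $A_X$. Hence $\gamma_{i_0}$ is injective at every interior weight. Equivalently, one may phrase the same contradiction through the costability half of the stability condition: the $\beta$-invariant subspace generated by $v$ lies inside $\ker a$ and is $\alpha_1,\alpha_2$-invariant, which a stable orbit forbids.

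The remaining, and genuinely harder, case is the transition weights $i_0=p_j$, where the matrix dimensions jump and the boundary data $a_{p_j},b_{p_j}$ are present. There the complex equation carries the correction term $b_{p_j}a_{p_j}$, so the clean intertwining is lost, and an eigenvector $v\in\ker\gamma_{p_j}$ need no longer satisfy $a_{p_j}v=0$; the bottom slot $za_{p_j}v$ of $A_X v$ can then be nonzero and block the contradiction. The way I would proceed is to run the argument on the stacked map $v\mapsto(\gamma_{p_j}v,\,a_{p_j}v)$, combining the boundary-corrected complex equation with the corresponding real equation $[\beta_{p_j+\frac12},\beta_{p_j+\frac12}^{\ast}]+\gamma_{p_j+1}\gamma_{p_j+1}^{\ast}-\gamma_{p_j}^{\ast}\gamma_{p_j}-a_{p_j}^{\ast}a_{p_j}=0$ to control the interaction between $\ker\gamma_{p_j}$ and $\ker a_{p_j}$, so that a destabilising eigenvector again annihilates all three slots of $A_X$. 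The main obstacle is precisely this bookkeeping at the transitions: showing that the $b_{p_j}a_{p_j}$ term cannot rescue a kernel vector. This is where the dimension jump peculiar to the $\text{SU}(N)$ equations meets the boundary data, a phenomenon absent from the square, single-interval situation treated by Braam and Austin, and it is the step I expect to require the most care.
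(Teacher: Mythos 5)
Your argument for the interval weights is correct and complete, and it is the standard one: the paper itself supplies no proof of this lemma (it is imported verbatim from Braam--Austin with the remark that its hypotheses are satisfied), and the intended proof is exactly your eigenvector-in-the-kernel contradiction. The chain $\gamma_{i_0}\beta_{i_0+\frac12}v=\beta_{i_0-\frac12}\gamma_{i_0}v=0$ giving $\beta$-invariance of $\ker\gamma_{i_0}$, the eigenvector $v$ with $\beta_{i_0+\frac12}v=\mu v$, and the vanishing of $A_{[-\mu:0:1]}v$ against the injectivity of $A_X$ (equivalently, against GIT stability via the invariant subspace $\mathbb{C}v\subset\ker a$) are all right, as is the observation that $av=0$ because $a$ is supported only on the weight-$p_j$ columns.

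The problem is your ``remaining, and genuinely harder, case,'' which is not a case at all. At a transition weight the block $\gamma_{p_j}$ has dimensions $(k_1+\cdots+k_{j-1})\times(k_1+\cdots+k_j)$: it is a map from a $(k_1+\cdots+k_j)$-dimensional weight space to a $(k_1+\cdots+k_{j-1})$-dimensional one, so for $k_j\geq 1$ its kernel has dimension at least $k_j$ and it \emph{cannot} be injective. No bookkeeping with the real equation and the $b_{p_j}a_{p_j}$ correction can rescue this; the strategy you outline in your last paragraph is aimed at a false statement. The lemma is only ever applied to the square interval blocks --- the paper uses it to gauge ``all the interval $\gamma_i$'' to the identity, while the rectangular $\gamma_{[p_i]}$ survive as part of the residual boundary data --- so the statement should be read, and proved, as asserting injectivity of $\gamma_i$ for $i$ strictly between consecutive transition points, which your first argument already establishes in full. (The dimensionally sensible assertion at a transition would be surjectivity of $\gamma_{p_j}$, obtainable from the dual argument applied to $B_X$, but that is neither claimed nor needed here.)
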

By the injectivity of the $\gamma_{i}$ and using the $ $$\text{GL}(\boldsymbol{k},\mathbb{C})$
action, 
\[
g_{i-\frac{1}{2}}\gamma_{i}g_{i+\frac{1}{2}}^{-1}=\text{I}
\]
we set all the interval $\gamma_{i}$ to the identity matrix. Then
in each interval, the $\beta_{i}$ are all equal to constant matrix
$\beta_{[p_{i}]}$ with subscript labelling the boundary point before
the interval. Square brackets in the subscript indicate that this
is the matrix after the $\text{GL}(\boldsymbol{k},\mathbb{C})$ action
has been applied. 
\begin{lem}[Braam-Austin 4.3]
 The data $(\{\beta_{[p_{i}]}\},\{\gamma_{[p_{i}]}\},\{a_{[p_{i}]}\},\{b_{[p_{i+1}]}\})$
defines a monad satisfying the ADHM equations if and only if $\{\beta_{[p_{i}]}^{l}a_{[p_{i}]}\}$
for $l=0,\ldots,k_{1}+\ldots+k_{i}$ span $\mathbb{C}^{k_{1}+\ldots+k_{i}}$.
\end{lem}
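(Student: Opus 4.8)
The plan is to read the phrase \emph{defines a monad} correctly: it is the requirement that the complex $\underline{H}\overset{A_X}{\rightarrow}\underline{K}\overset{B_X}{\rightarrow}\underline{L}$ have locally free cohomology, which amounts to the two pointwise open conditions that $A_X$ be injective and $B_X$ be surjective for \emph{every} $X\in\mathbb{P}^2$. The algebraic relation $B_XA_X=0$ is already guaranteed by the discrete Nahm equations (equivalently, the ADHM equations) satisfied by the gauge-fixed data, so only the nondegeneracy of $A_X$ and of $B_X$ remains, and I would show that this nondegeneracy is equivalent to the asserted cyclicity (Krylov-spanning) conditions.

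First I would dispose of the fibre at infinity: on the line $\{z=0\}$ the maps $A_X,B_X$ reduce to the identity blocks in $x,y$, so $A_X$ is automatically injective and $B_X$ automatically surjective there. It therefore suffices to work in the affine chart $z\neq 0$, where after scaling $A_Xv=0$ becomes the simultaneous system $\alpha_1 v=-xv$, $\alpha_2 v=-yv$, $av=0$; dually, failure of surjectivity of $B_X$ produces a left vector $w$ with $w\alpha_1=-xw$, $w\alpha_2=-yw$, $wb=0$. Thus nondegeneracy fails exactly when there is a common eigenvector of $(\alpha_1,\alpha_2)$ lying in $\ker a$, respectively a common left eigenvector annihilating $\text{im}\,b$.

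Next I would exploit the gauge-fixed block form together with the $\mathbb{C}^\times$-grading. Since the interior $\gamma_i$ have been normalised to the identity and the rectangular $\gamma_{p_j}$ at the jumps are injective (Lemma \ref{lem:BA 4.2}), the matrix $\alpha_2$ is a block shift raising weight by one and is therefore nilpotent, so any common eigenvector has eigenvalue $0$ and the eigenvector system splits along the graded pieces. On the level whose weight runs through an interval $(p_i,p_{i+1})$ the weight-preserving matrix $\alpha_1$ acts as the single constant block $\beta_{[p_i]}$ on $\mathbb{C}^{k_1+\ldots+k_i}$, and the $\gamma=\mathrm{I}$ identifications turn the condition ``an $(\alpha_1,\alpha_2)$-invariant vector in $\ker a$'' into ``the Krylov vectors $\{\beta_{[p_i]}^l a_{[p_i]}\}$ do not fill $\mathbb{C}^{k_1+\ldots+k_i}$''. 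I would verify this equivalence one level at a time, which matches the recursive, interval-by-interval construction of the rational map; the real structure $\overline{J^\ast E}\cong E^\ast$ interchanges injectivity of $A$ with surjectivity of $B$, so the $b$-conditions are adjoint to the $a$-conditions and impose nothing new. The truncation of the exponents at $l=k_1+\ldots+k_i$ is then just Cayley--Hamilton for $\beta_{[p_i]}$.

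The main obstacle I anticipate is the bookkeeping at the dimension jumps. I must check that the injectivity of the rectangular $\gamma_{p_j}$ really confines the kernel and cokernel of $\alpha_2$ to the expected graded pieces, and that propagating the eigenvector equation across a boundary matches the framing blocks $a_{p_i}$, $b_{p_i}$ to the correct level, so that the global nondegeneracy decouples cleanly into one cyclicity condition per interval rather than only at the top and bottom levels. Getting the weight indexing, the half-integer $\beta$ labels, and the jump in matrix sizes to line up is where the genuine care lies; the conceptual content, that an open monad condition is equivalent to cyclicity of $(\beta_{[p_i]},a_{[p_i]})$, is standard once the shift structure of $\alpha_2$ is in hand.
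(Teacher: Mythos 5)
The paper does not actually prove this statement: it is imported verbatim from Braam and Austin (their Lemma 4.3, stated there for $\mathrm{SU}(2)$) with only the remark that its hypotheses are met here, so there is no in-paper argument to compare you against. That said, your overall strategy is the standard and correct one for the $\mathrm{SU}(2)$ case: read ``defines a monad'' as pointwise injectivity of $A_X$ and surjectivity of $B_X$, note that $B_XA_X=0$ is exactly the complex discrete Nahm system, dispose of the line $z=0$, observe that $\alpha_2$ is a nilpotent block shift so any common eigenvector is annihilated by $\alpha_2$ and (since everything is graded) may be taken in a single weight space, and then convert ``no eigenvector of $\beta$ in $\ker a$'' into the Krylov spanning condition via the Hautus test, with Cayley--Hamilton truncating the exponents.

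There is, however, a genuine gap at the dimension jumps, and it is not mere bookkeeping. You assert that the rectangular $\gamma_{p_j}$ are injective by the preceding lemma; that is dimensionally impossible: $\gamma_{p_j}$ has size $(k_1+\cdots+k_{j-1})\times(k_1+\cdots+k_j)$, so it maps a larger space to a smaller one and $\dim\ker\gamma_{p_j}\ge k_j>0$ (the injectivity statement can only apply to the square interior blocks; for the transition blocks the correct maximal-rank condition is surjectivity). This matters because $\ker\alpha_2$ is then not confined to the lowest weight space: at each boundary weight $p_i$ with $2\le i\le N-1$, injectivity of $A_X$ is equivalent only to the absence of an eigenvector of $\beta_{[p_i]}$ in $\ker\gamma_{p_i}\cap\ker a_{[p_i]}$, which is strictly weaker than cyclicity of the pair $(\beta_{[p_i]},a_{[p_i]})$, i.e.\ no eigenvector in $\ker a_{[p_i]}$ alone. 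So your argument yields ``cyclicity $\Rightarrow$ monad'' at the jumps but not the converse; closing the ``only if'' direction requires feeding in the surjectivity of $B_X$ together with the jump relation $\beta_{p_i-\frac{1}{2}}\gamma_{p_i}-\gamma_{p_i}\beta_{p_i+\frac{1}{2}}+b_{p_i}a_{p_i}=0$ to propagate an offending eigenvector across the boundary. Relatedly, your appeal to the real structure to get the $B_X$ conditions for free does not work on the plane $w=0$: since $J[x:y:z:0]=[\bar y:-\bar x:0:-\bar z]$, the involution carries $\{w=0\}$ to $\{z=0\}$, so over this $\mathbb{P}^2$ surjectivity of $B_X$ is not the adjoint of injectivity of $A_X$ and must be argued separately, for instance from the real discrete Nahm equations.
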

The procedure is as follows. Choose a ``horosphere line'' $\mathbb{P}_{h}^{1}$
in $\mathbb{P}^{2}$ with coordinates say $x\mapsto[x:h:-1]$. The
trivialisation of $E$ over $\mathbb{P}_{\infty}^{1}$ is also a trivialisation
of the monad in the sense that over $\mathbb{P}_{\infty}^{1}$, $(\boldsymbol{0},\boldsymbol{0},r)\in K$,
$r\in\mathbb{C}^{N}$ are representatives of the global sections of
$E\vert_{\mathbb{P}_{\infty}^{1}}$. Extended to $\mathbb{P}_{h}^{1}$,
this trivialisation is 
\[
\left[\begin{array}{c}
-\left(h-\alpha_{2}\right)^{-1}b\\
0_{\kappa}\\
I_{N}
\end{array}\right]r+\left[\begin{array}{c}
\left(h-\alpha_{2}\right)^{-1}\left(x-\alpha_{1}\right)\\
I_{\kappa}\\
0_{N}
\end{array}\right]Y\in K
\]
where $Y\in\mathbb{C}^{\kappa}$.

Consider the splitting of $E$ over $\mathbb{P}_{+}^{1}$, 
\[
E=\mathcal{O}(k_{1})\otimes\mathcal{L}^{p_{1}}\oplus\ldots\oplus\mathcal{O}(k_{r})\otimes\mathcal{L}^{p_{r}}\oplus\ldots\oplus\mathcal{O}\left(k_{N}\right)\mathcal{L}^{p_{N}}.
\]
Atiyah showed that in the $\text{SU}(2)$ case, the last factor extends
by flowing along the $\mathbb{C}^{\times}$-action to a sub-line-bundle
over $\mathbb{P}^{3}-\mathbb{P}_{-}^{1}$. The sum of the last two
factors extend to a sub-plane-bundle and the sum of the last three
extend to a rank 3 sub-bundle of $E$, etc. 

\begin{lem}
On $\mathbb{P}^{2}-\mathbb{P}_{-}^{1}$, there exists unique holomorphic
sub-bundles $L_{1}^{+}\subset L_{2}^{+}\subset\ldots\subset L_{N-1}^{+}$
of $E$ which is preserved by the $\mathbb{C}^{\times}$-action and
each $L_{i}^{+}$ restricted to $\mathbb{P}_{+}^{1}$ coincides with
the last $i$-th factors.
\end{lem}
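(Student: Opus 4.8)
The plan is to build the $L_i^+$ by the Bialynicki--Birula flow, generalizing Atiyah's extension of a single highest-weight line sub-bundle in the $\text{SU}(2)$ case. First I would record the $\mathbb{C}^\times$-geometry of the base. On $\mathbb{P}^2=\{w=0\}$ the action reduces to $[x:y:z]\mapsto[x:cy:z]$, whose fixed locus is the line $\mathbb{P}_+^1=\{y=0\}$ together with the single point $X_-=[0:1:0]=\mathbb{P}^2\cap\mathbb{P}_-^1$. Hence $\mathbb{P}^2-\mathbb{P}_-^1=\mathbb{P}^2\setminus\{X_-\}$ is precisely the attracting cell for $c\to0$: every orbit closure is a projective line through $X_-$ meeting $\mathbb{P}_+^1$ in exactly one point, and $\pi(q):=\lim_{c\to0}c\cdot q\in\mathbb{P}_+^1$ is a holomorphic retraction exhibiting $\mathbb{P}^2\setminus\{X_-\}$ as (the total space of) a line bundle over $\mathbb{P}_+^1$.

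Next I would use the weight splitting already established over the fixed line, namely $E|_{\mathbb{P}_+^1}=\bigoplus_{j=1}^N\mathcal{O}(k_j)\otimes\mathcal{L}^{p_j}$ with \emph{distinct} weights $p_1<\ldots<p_N$, each summand a line bundle since $E$ has rank $N$. Writing $F_j=\mathcal{O}(k_j)\otimes\mathcal{L}^{p_j}$, set $G_i=F_{N-i+1}\oplus\ldots\oplus F_N$, the sum of the $i$ summands of highest weight. These form a flag $G_1\subset G_2\subset\ldots\subset G_{N-1}$ of $\mathbb{C}^\times$-invariant sub-bundles of $E|_{\mathbb{P}_+^1}$, and $G_i$ is exactly ``the last $i$ factors'' of the statement.

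Then I would extend $G_i$ off the fixed line by flowing along the $\mathbb{C}^\times$-action. Restricting to a single orbit line $C=\pi^{-1}(q_+)\cup\{X_-\}\cong\mathbb{P}^1$, the bundle $E|_C$ is $\mathbb{C}^\times$-equivariant on $\mathbb{P}^1$, so it splits equivariantly into $N$ line bundles whose fibre weights at the attracting point $q_+$ are precisely $p_1,\ldots,p_N$. Because these weights are distinct, the direct sum of the summands of weight $\geq p_{N-i+1}$ is an unambiguous rank-$i$ sub-bundle of $E|_C$ that equals $(G_i)_{q_+}$ at $q_+$. Letting $q_+$ vary, these assemble into $L_i^+$ over $\mathbb{P}^2\setminus\{X_-\}$, which I would identify invariantly with the ``weight $\geq p_{N-i+1}$'' step of the canonical Bialynicki--Birula filtration of the equivariant bundle $E$ over its attracting cell. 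Invariance under $\mathbb{C}^\times$ is then automatic; the nesting $L_i^+\subset L_{i+1}^+$ follows from $G_i\subset G_{i+1}$; and $L_i^+|_{\mathbb{P}_+^1}=G_i$ holds by construction. Uniqueness is formal: a $\mathbb{C}^\times$-invariant sub-bundle on $\mathbb{P}^2\setminus\{X_-\}$ is determined by its restriction to $\mathbb{P}_+^1$, since the flow carries each fibre into a fibre over $\mathbb{P}_+^1$.

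The main obstacle is the holomorphy and constancy of rank of this fibrewise construction across all of $\mathbb{P}^2\setminus\{X_-\}$, including in the limit onto $\mathbb{P}_+^1$ --- that is, showing the ``high-weight'' subspaces genuinely glue into a holomorphic sub-bundle rather than a merely set-theoretic family of subspaces. This is exactly where the ordering $p_1<\ldots<p_N$ is essential: under $c\to0$ a weight-$p_j$ direction scales by $c^{p_j}$, so the distinctness of weights prevents any mixing of weight spaces and singles out $G_i$ as the directions that remain inside the sub-bundle in the limit, which also forces the extension to be unique. I would discharge this either by the local equivariant triviality of $E$ near $\mathbb{P}_+^1$ (linearizing the action so that the weight filtration extends holomorphically over each attracting chart) or, more explicitly, by using the trivialization of the monad over a horosphere line $\mathbb{P}_h^1$ written just above to produce local holomorphic frames for $L_i^+$ and verify that the resulting transition data is holomorphic.
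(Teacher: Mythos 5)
Your construction is essentially the one the paper uses: the paper also extends the flag of highest-weight factors from $\mathbb{P}_{+}^{1}$ over the attracting set by flowing along each $\mathbb{C}^{\times}$-orbit, phrased there in terms of the invariant holomorphic sections $(0,\ldots,0,u_{i}(z),\ldots,u_{N}(z))$ of $E$ restricted to an orbit closure rather than in Bialynicki--Birula language. Your existence argument (restrict to an orbit line, use the equivariant splitting and the distinctness of the weights $p_{1}<\ldots<p_{N}$, assemble the top-weight pieces into $L_{i}^{+}$) matches the paper's step for step, and your attention to the holomorphy/rank-constancy of the resulting family is if anything more careful than the paper's.

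The one step that does not survive scrutiny as written is the uniqueness claim, ``a $\mathbb{C}^{\times}$-invariant sub-bundle on $\mathbb{P}^{2}\setminus\{X_{-}\}$ is determined by its restriction to $\mathbb{P}_{+}^{1}$.'' That principle is false. Restrict to a single orbit closure with affine coordinate $z$ centred at the attracting point and two fibre weights $p_{1}<p_{2}$: for every $\lambda\in\mathbb{C}$ the line spanned by $\left(1,\lambda z^{p_{2}-p_{1}}\right)$ is a $\mathbb{C}^{\times}$-invariant holomorphic sub-line-bundle, and all of these restrict to the same (lowest-weight) line at $z=0$. So invariant sub-bundles are not determined by their values on the fixed locus; what is true, and what the lemma actually asserts, is uniqueness for sub-bundles agreeing on $\mathbb{P}_{+}^{1}$ with the \emph{top} $i$ weight spaces. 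For those you need the short computation: writing a candidate invariant rank-$i$ sub-bundle near the attracting point as the graph of a map from the top $i$ weight spaces to the lower ones, invariance forces each component $g$ to satisfy $g(cz)=c^{p_{j}-p_{k}}g(z)$ with $p_{j}<p_{k}$, i.e.\ to be homogeneous of negative degree, hence identically zero by holomorphy. This is precisely where the direction of the flow and the ordering of the weights enter (the low-weight flag extends invariantly but \emph{non}-uniquely, as the $\lambda$-family shows), so the asymmetry should be argued explicitly rather than absorbed into a formal determinacy claim.
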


\begin{proof}
The bundle $E$ restricted to a $\mathbb{C}^{\times}$-orbit $\mathbb{P}^{1}-\{\text{pt of }\mathbb{P}_{-}^{1}\}$
has the following $\mathbb{C}^{\times}$-action:

\[
c\cdot(z;u_{1},\ldots,u_{N})=\left(cz;c^{p_{1}}u_{1},\ldots,c^{p_{N}}u_{N}\right).
\]
In the limit $c\rightarrow0$, the global holomorphic sections of
the form $(0,0,\ldots,0,u_{N}(z))$ are preserved by the $\mathbb{C}^{\times}$-action
since multiplication by $c\in\mathbb{C}^{\times}$ cannot change zero
into a non-zero number. Since the space of such sections is one dimensional,
they give us a sub-line bundle $L_{1}^{+}$ of $E$. The sections
have weight $-p_{N}$ and so must coincide with the first factor in
the splitting of $E$ over $\mathbb{P}_{+}^{1}$.

Similarly for $1<i<N$, in the $c\rightarrow0$ limit, the global
holomorphic sections 
\[
(0,\ldots,0,u_{i}(z),u_{i+1}(z),\ldots,u_{N}(z)),
\]
are preserved by the $\mathbb{C}^{\times}$-action and have weights
$(p_{i},\ldots,p_{N})$. The set of them is $\left(N-i+1\right)$-dimensional
so they define a rank $(N-i+1)$ sub-bundle $L_{N-i+1}^{+}$ of $E$.

By induction, a section of the form $(0,\ldots,0,u_{i}(z),\ldots,u_{N}(z))$
is also a section of the sub-bundle given by sections of the form
$(0,\ldots,u_{i-1}(z),\ldots,u_{N}(z))$ so $L_{N-i+1}^{+}\subset L_{N-i}^{+}$
and thus the sub-bundles are a chain ordered by subset.

These are the only sections preserved by the $\mathbb{C}^{\times}$-action
since the $\mathbb{C}^{\times}$-action is transitive on the non-zero
entries of sections. Hence the holomorphic sub-bundles $L_{1}^{+}\subset\ldots\subset L_{N-1}^{+}$
preserved by the $\mathbb{C}^{\times}$-action thus defined are unique. 
\end{proof}

The rational map $f$ is defined by sending each point $x$ of $\mathbb{P}_{+}^{1}$
to the fibre of the restriction of $L_{1}^{+}\subset\ldots\subset L_{N-1}^{+}\subset E$
to the orbit of $\mathbb{C}^{\times}$ whose limit is $x$. The chain
of sub-bundles over the $\mathbb{C}^{\times}$-orbit is trivialised
by taking the intersection of the $\mathbb{C}^{\times}$-orbit with
the chosen horosphere line $\mathbb{P}_{h}^{1}$ as the unit point
and then the rest of the isomorphism is constructed by flowing along
the $\mathbb{C}^{\times}$-orbit using the $\mathbb{C}^{\times}$-action.
Canonically, 
\[
\left(L_{1}^{+},\ldots,L_{N-1}^{+}\right)\vert_{\mathbb{C}^{\times}}\cong(\mathbb{C}^{1},\ldots,\mathbb{C}^{N-1})\times\mathbb{C}^{\times}
\]
so that $f(z)$ is an element of the manifold of full flags $\text{Fl}_{\text{full}}(N)$.

Since $E$ has a canonical trivialisation over $\mathbb{P}_{h}^{1}$,
we can find equations for the rational map. On the level of the monad,
the rank $i$ sub-bundle is produced exactly when the $p_{1},\ldots,p_{N-i}$
weight spaces are in the kernel of $A_{X}$. This happens when the
expression for each $p_{i}$ weight space in the monad trivialisation
is equal to the negative of some element of the image of $A_{X}$.

Using Lemma \ref{lem:BA 4.2} to linearly transform $\{\gamma_{[j]}\}_{j\neq p_{i}}$
into identity matrices, we can invert $(h-\alpha_{2})$. Writing $r=(r_{1},\ldots,r_{N})$,
we define the algebraic equations of a flag of subspaces by recursion.
The condition that the $p_{1}$ weight space be in the kernel of
$A_{X}$ is equivalent to solving the equations 
\[
(-h)^{p_{N-1}-p_{N}}b_{[p_{N}]}r_{N}+(x-\beta_{[p_{N-1}+\frac{1}{2}]})w_{p_{N-1}}=0
\]
\[
r_{N-1}+a_{[p_{N-1}]}w_{p_{N-1}}=0.
\]
Solving for $r_{N-1}$ in terms of $r_{N}$, this is 
\[
r_{N-1}=(-h)^{p_{N-1}-p_{N}}a_{[p_{N-1}]}\left(x-\beta_{[p_{N-1}]}\right)^{-1}b_{[p_{N}]}r_{N}
\]
which defines a line in a plane for any $x\in\mathbb{P}^{1}$.

Proceeding in the same way for the other weight spaces, we have: 
\begin{prop}
Let $(\{\gamma_{i}\},\{\beta_{i}\},\{a_{p_{j}}\},\{b_{p_{j+1}}\})$
be a solution of the $(N-1)$-interval discrete Nahm equations of
type $\left(p_{1},\ldots,p_{N-1};k_{1},\ldots,k_{N-1}\right)$.
Then the solution can be put into the form $(\{\beta_{[p_{i}]}\},\{\gamma_{[p_{i}]}\},\{a_{[p_{i}]}\},\{b_{[p_{i+1}]}\})$
and the rational map,

\[
f:\mathbb{P}^{1}\rightarrow\text{Fl}_{\text{full}}(N)
\]
\[
x\mapsto(V_{1},\ldots,V_{N-1}),\ \ \text{dim }V_{i}=i,
\]

into the manifold of full flags in $\mathbb{C}^{N}$ can be written
as the maps $(r_{1}(x),\ldots,r_{N-1}(x))$,

\begin{eqnarray*}
r_{N-1}(x) & = & (-h)^{p_{N-1}-p_{N}}a_{[p_{N-1}]}\left(x-\beta_{[p_{N-1}]}\right)^{-1}b_{[p_{N}]}r_{N}(x)\\
 & \vdots\\
r_{j}(x) & = & \sum_{i=j+1}^{N}(-h)^{p_{j}-p_{i}}a_{[p_{j}]}\left(x-\beta_{[p_{j}]}\right)^{-1}b_{[p_{i}]}^{k_{1}+\ldots+k_{j}}r_{i}(x)\\
 & \vdots\\
r_{1}(x) & = & \sum_{i=2}^{N}(-h)^{p_{1}-p_{i}}a_{[p_{1}]}\left(x-\beta_{[p_{1}]}\right)^{-1}b_{[p_{i}]}^{k_{1}}r_{i}(x)
\end{eqnarray*}
where for each $x\in\mathbb{P}^{1}$, $r_{N-1}(x)$ specifies an $(N-1)$-dimensional
linear subspace in $\mathbb{C}^{N}$ and each successive $r_{i}(x)$
specifies an $i$-dimensional linear subspace inside the $(i+1)$-dimensional
linear subspace specified by $r_{i+1}(x)$. The superscript $k_{1}+\ldots+k_{j}$
indicates that only the first $k_{1}+\ldots+k_{j}$ entries of the
vector are involved. 
\end{prop}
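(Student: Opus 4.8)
The plan is to read the rational map directly off the canonical trivialization of $E$ over the horosphere line $\mathbb{P}_{h}^{1}$, converting the geometric condition that cuts out each sub-bundle $L_{i}^{+}$ into an explicit linear system in the framing data $r=(r_{1},\ldots,r_{N})$, and then to solve that system by descending recursion on the $\mathbb{C}^{\times}$-weight.

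First I would fix the gauge. By Lemma \ref{lem:BA 4.2} the interval maps $\gamma_{i}$ are injective, so the residual $\text{GL}(\boldsymbol{k},\mathbb{C})$ freedom can be used to set every interval $\gamma_{i}$ equal to the identity; this makes $\beta_{i}$ constant on each interval (the matrix $\beta_{[p_{i}]}$) and leaves exactly the boundary data $(\{\beta_{[p_{i}]}\},\{\gamma_{[p_{i}]}\},\{a_{[p_{i}]}\},\{b_{[p_{i+1}]}\})$; the Braam-Austin monad lemma (4.3) guarantees this still defines a monad. In this gauge $\alpha_{2}$ is the weight-lowering shift assembled from the identity interval blocks and the rectangular transition blocks $\gamma_{p_{i}}$, so on the horosphere line $[x:h:-1]$ the resolvent $(h-\alpha_{2})^{-1}$ expands as the finite sum $\sum_{n\ge0}\alpha_{2}^{n}h^{-(n+1)}$. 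Each application of $\alpha_{2}$ drops the weight by one and contributes a factor $h^{-1}$, so propagating the $b_{[p_{i}]}r_{i}$ contribution down to weight $p_{j}$ produces the power $h^{\,p_{j}-p_{i}}$; combined with the minus signs carried by $-(h-\alpha_{2})^{-1}b$ in the trivialization, this is the origin of the factors $(-h)^{p_{j}-p_{i}}$.

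Next I would impose the kernel conditions. Writing the monad representative over $\mathbb{P}_{h}^{1}$ in the displayed form, with $r$ in the last $N$ entries of $K$ and the free vector $Y\in\mathbb{C}^{\kappa}$ parametrizing $\text{im }A_{X}$, the sub-bundle $L_{N-i+1}^{+}$ is cut out fibre-wise by demanding that the weight $p_{1},\ldots,p_{N-i}$ components of the representative lie in $\text{im }A_{X}$, i.e. vanish in the cohomology $E=\ker B_{X}/\text{im }A_{X}$. Resolved weight space by weight space, this yields at each boundary $p_{j}$ a pair of equations: one of the form (propagated $b$-term) $+\,(x-\beta_{[p_{j}]})w_{p_{j}}=0$ coming from the $\alpha_{1}$-block of $A_{X}$, together with $r_{j}+a_{[p_{j}]}w_{p_{j}}=0$ coming from the framing row $a_{[p_{j}]}$; these are precisely the two equations already displayed for the $r_{N-1}$ case. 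Eliminating the auxiliary vector $w_{p_{j}}$ between them gives $r_{j}=a_{[p_{j}]}(x-\beta_{[p_{j}]})^{-1}\cdot(\text{propagated }b\text{-term})$, and substituting the expansion of $(h-\alpha_{2})^{-1}$ supplies the summand $(-h)^{p_{j}-p_{i}}a_{[p_{j}]}(x-\beta_{[p_{j}]})^{-1}b_{[p_{i}]}r_{i}$ for each $i>j$, assembling into the stated formulae. Finally, since the equation defining $r_{j}$ is obtained by imposing one further kernel condition on top of those defining $r_{j+1}$, each $V_{j}$ is automatically a $j$-dimensional subspace of $V_{j+1}$, so $f(x)=(V_{1},\ldots,V_{N-1})$ lands in $\text{Fl}_{\text{full}}(N)$.

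The step I expect to be the main obstacle is the bookkeeping of $(h-\alpha_{2})^{-1}$ across the weight filtration where the matrix dimensions jump. One must check that the rectangular transition blocks $\gamma_{p_{m}}$ (for $j<m\le i$) together with the identity interval blocks combine so that propagating from weight $p_{i}$ down to $p_{j}$ amounts to multiplication by $(-h)^{p_{j}-p_{i}}$ followed by truncation to the first $k_{1}+\ldots+k_{j}$ coordinates --- which is exactly the meaning of the superscript $k_{1}+\ldots+k_{j}$ on $b_{[p_{i}]}$. Verifying that this truncation is compatible with the nesting $V_{j}\subset V_{j+1}$ is the delicate point, and it rests on the precise block forms of $a$, $b$ and $\alpha_{2}$ recorded in the weight-decomposition figures.
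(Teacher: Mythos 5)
Your proposal follows the paper's own derivation essentially step for step: gauge-fixing via Braam--Austin 4.2 so that the interval $\gamma_{i}$ become identities and the $\beta_{i}$ constant, writing the monad trivialisation over the horosphere line $[x:h:-1]$, imposing the kernel-of-$A_{X}$ conditions weight space by weight space, and eliminating the auxiliary vector $w_{p_{j}}$ to obtain the stated recursion. If anything you supply more detail than the paper does --- the nilpotent expansion of $(h-\alpha_{2})^{-1}$ explaining the $(-h)^{p_{j}-p_{i}}$ factors and the truncation superscript is left implicit in the paper, which only works out the $r_{N-1}$ case explicitly and asserts the rest follows ``in the same way.''
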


\bigskip{}
Note that when $N=2,$ the equation of the rational map is of the
form 
\[
r(x)=\dfrac{r_{2}(x)}{r_{1}(x)}=(-h)^{2p}v(x-\beta)^{-1}v^{t}
\]
which is the rational map found by Atiyah for $\text{SU}(2)$ hyperbolic
monopoles \cite{key-3,key-4}.


\section{The Boundary Value of a Monopole}

On the conformal sphere at infinity, $S_{\infty}^{2}$, the holomorphic
vector bundle $\mathcal{E}$ splits into holomorphic line bundles
$\mathcal{O}(k_{1})\oplus\ldots\oplus\mathcal{O}(k_{N-1})$ and the
gauge field $A$ restricted to $S_{\infty}^{2}$, induces a a $\text{U}(1)$
connection $A_{i}$ on each factor $\mathcal{O}(k_{i})$. We define
the $(N-1)$-tuple $(A_{1},\ldots,A_{N-1})$ to be the boundary value
or connections at infinity.

We shall prove the following generalisation of Braam-Austin's theorem
\cite{key-3} regarding the boundary values of $\text{SU}(2)$ hyperbolic
monopoles. 

\begin{thm}
Let $(A,\Phi)$ be a framed $\text{SU}(2)$ hyperbolic monopole. Then

\begin{minipage}[t]{1\columnwidth}%
\begin{enumerate}
\item the $(N-1)$ tuple of $\text{U}(1)$ connections $(A_{1},\ldots,A_{N-1})$
on $S_{\infty}^{2}$ determines the connection $A$ (up to gauge transformations); 
\item there exists for $i=1,\ldots,N-1$, holomorphic maps 
\[
F_{i}:\mathbb{P}^{1}\rightarrow\text{Fl}(k_{1}+\ldots+k_{i},k_{1}+\ldots+k_{i}+1,2k_{1}+\ldots+2k_{i-1}+k_{i}+1)
\]
into the manifold of two term partial flags for which each $A_{i}$
is the pullback of the unitary invariant connection on the ``hyperplane
bundle'' $\mathcal{O}(1,-1)$ of the $i$-th flag manifold; and 
\item the map $A\mapsto(A_{1},\ldots,A_{N-1})$ is an immersion of the moduli
space of $\text{SU}(N)$ framed hyperbolic monopoles in the moduli
of $(N-1)$ tuples of $\text{U}(1)$ connections on $S^{2}$.\end{enumerate}
\end{minipage}\end{thm}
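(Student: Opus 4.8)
The plan is to follow the strategy Braam and Austin used for $\text{SU}(2)$, adapting each step to the flag-manifold setting that the weight decomposition of the previous sections forces upon us. The central object is the restriction $\mathcal{E} = E\vert_{S^2_\infty}$, which splits as $\mathcal{O}(k_1)\oplus\cdots\oplus\mathcal{O}(k_{N-1})$, so the three parts of the theorem are really three facets of a single correspondence between the monopole and the holomorphic data living over the two fixed lines $\mathbb{P}^1_\pm$. For part (2), the key is to recognise that the boundary connection $A_i$ should arise as a pullback of the tautological unitary connection on a specific flag bundle. I would build the maps $F_i$ directly from the monad data over $\mathbb{P}^1_+$ and $\mathbb{P}^1_-$: by the Chern character calculation, the weight spaces at adjacent integers contribute vector spaces of dimensions $k_1+\cdots+k_i$, $k_1+\cdots+k_i+1$, and $2k_1+\cdots+2k_{i-1}+k_i+1$, which are exactly the dimensions appearing in the target $\text{Fl}(k_1+\ldots+k_i,\,k_1+\ldots+k_i+1,\,2k_1+\ldots+2k_{i-1}+k_i+1)$. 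I expect each $F_i$ to send $x\in\mathbb{P}^1$ to the nested pair of subspaces cut out by $\ker B_X$ and $\operatorname{im}A_X$ restricted to the relevant weight blocks, precisely paralleling the map $\mathbb{P}^1\to\mathbb{P}^k$ in the $\text{SU}(2)$ case but now with two nested subspaces instead of one line.

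The second step is to show that $A_i$ really is the pullback of the invariant connection on the hyperplane bundle $\mathcal{O}(1,-1)$ of the $i$-th flag manifold. Here I would argue that the $\text{U}(1)$ factor $\mathcal{O}(k_i)$ of $\mathcal{E}$ is the determinant of a successive quotient in the flag, hence that the induced connection on $\mathcal{O}(k_i)$ is the one inherited from the universal structure on $\text{Fl}$. This is an application of naturality of the Chern--Weil connection under pullback, combined with the uniqueness of the $\mathbb{C}^\times$-invariant sub-bundle chain $L_1^+\subset\cdots\subset L_{N-1}^+$ established in the rational-map lemma: the splitting of $\mathcal{E}$ at infinity is forced to be compatible with that flow, so the boundary $\text{U}(1)$ connections cannot be anything other than the pullbacks.

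For part (1), the determination of $A$ up to gauge by the tuple $(A_1,\ldots,A_{N-1})$, I would reconstruct the full instanton from the boundary data by reversing the Penrose--Ward transform. The connections $A_i$ encode the holomorphic structure of $\mathcal{E}$ together with its splitting over $S^2_\infty$; combined with the real structure $J$ on $\mathbb{P}^3$ and the reality condition it imposes, this should determine $E$ as a holomorphic bundle over all of $\mathbb{P}^3$, hence the monopole $(A,\phi)$ up to gauge. The argument runs through Theorem 1 (the equivalence with discrete Nahm data) backwards: the $F_i$ recover the boundary values of the solution matrices, and Lemma \ref{lem:BA 4.2} together with the spanning condition of Braam--Austin 4.3 guarantees that this boundary data pins down the whole solution on each interval, hence the whole monad.

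Finally, part (3), that $A\mapsto(A_1,\ldots,A_{N-1})$ is an immersion, follows once injectivity of the derivative is checked. I would argue that a tangent vector to the monopole moduli space corresponds to an infinitesimal deformation of the monad data, and that its image under the differential of the boundary map is the induced deformation of the maps $F_i$; if this infinitesimal image vanishes then all the boundary flag data is unchanged, whence by part (1) the deformation of $A$ is pure gauge, i.e. the tangent vector is zero in the moduli space. The hardest step, I expect, is part (1) — proving that the boundary data genuinely determines the interior and not merely constrains it. The $\text{SU}(2)$ proof of Braam--Austin relies on a delicate analysis of how the single map $\mathbb{P}^1\to\mathbb{P}^k$ controls the monopole; in the $\text{SU}(N)$ case the obstruction is that I must show the $(N-1)$ \emph{separate} flag maps $F_i$, each living over a different pair of adjacent weight levels, do not lose information at the interval boundaries $p_j$ where the matrix dimensions jump. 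Controlling the gluing of the reconstructed data across these dimension jumps, and verifying that the reality condition from $J$ closes up consistently across all intervals simultaneously, is where the real work lies.
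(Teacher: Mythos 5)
Your construction of the maps $F_{i}$ for part (2) matches the paper's: the paper restricts $A_{x}$ and $B_{x}$ to the weight-$p_{i}$ block of the monad over $\mathbb{P}_{+}^{1}$ and sets $F_{i}=\left(A_{x}(H_{p_{i}}),\,B_{x}^{t}(L_{p_{i}})^{\perp}\right)$, which is the nested pair of subspaces you describe, and the line bundle $\mathcal{O}(k_{i})$ with its connection is then tautologically the pullback of the invariant data on the two-term flag manifold. Up to that point you are on the paper's track.

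The genuine gap is in part (1). You assert that the connections $A_{i}$ ``encode the holomorphic structure of $\mathcal{E}$ together with its splitting over $S_{\infty}^{2}$'' and that one can then ``reverse the Penrose--Ward transform.'' Neither step works as stated. The holomorphic type of $\mathcal{E}\vert_{S_{\infty}^{2}}\cong\mathcal{O}(k_{1})\oplus\ldots\oplus\mathcal{O}(k_{N-1})$ is rigid (it is fixed by the degrees $k_{i}$), so it carries no moduli; all the information is in the unitary connections $A_{i}$, and the problem is precisely to show that a $\text{U}(1)$ connection on $S^{2}$ determines the holomorphic map $F_{i}$ into the flag manifold, i.e.\ the small monad at weight $p_{i}$. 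The paper's essential ingredient here --- absent from your proposal --- is Calabi's rigidity theorem: composing $F_{i}$ with the Pl\"ucker and Segre embeddings gives $F_{i}^{\mathbb{P}}:\mathbb{P}^{1}\hookrightarrow\mathbb{P}^{\mathfrak{k}(i)}$, and the curvature $F_{A_{i}}$ is the induced K\"ahler form on $\mathbb{P}^{1}$, which by Calabi determines the embedding up to isometry of the target. Only then do the recovered small monads serve as boundary data whose propagation under the $(N-1)$-interval discrete Nahm equations determines the whole solution --- the part of the argument you do correctly anticipate. The Penrose--Ward transform cannot substitute for this, since it requires holomorphic data over all of twistor space, not merely over the fixed lines. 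Separately, your argument for part (3) infers injectivity of the differential from part (1), but injectivity of a map does not imply injectivity of its derivative; the paper instead asserts that the boundary values furnish local coordinates on the moduli space, which is the statement that actually needs to be established.
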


\begin{proof}
From Lemma, we have a decomposition of the monad $H\rightarrow K\rightarrow L$
restricted to $\mathbb{P}_{+}^{1}$ (which by abuse of notation, I
conflate with $S_{\infty}^{2}$ since any connections on $\mathbb{P}_{+}^{1}$
descend to connections on $S_{\infty}^{1}$ along the twistor transform)
into weight spaces. By considering the maps $A_{x}$ and $B_{x}$
restricted to a weight subspace, we get what is called a small monad.
By dimensional considerations, the cohomology of a generic small monad
($p_{i}<j<p_{i+1}$) 
\[
\xymatrix{ & \mathbb{C}^{k_{1}+\ldots+k_{i}}\ar[r]^{\gamma_{j}} & \mathbb{C}^{k_{1}+\ldots+k_{i-1}}\\
\mathbb{C}^{k_{1}+\ldots+k_{i}}\ar[r]_{\gamma_{j}}\ar[ur]^{\beta_{j+\frac{1}{2}}} & \mathbb{C}^{k_{1}+\ldots+k_{i-1}}\ar[ur]_{\beta_{j-\frac{1}{2}}}
}
\]
is trivial except for the weight spaces $p_{1},\ldots,p_{N}$
which take the form

\[
\xymatrix{ & \mathbb{C}_{p_{i}}^{k_{1}+\ldots+k_{i}}\ar[rd]^{\gamma_{p_{i}}}\\
\mathbb{C}_{p_{i}}^{k_{1}+\ldots+k_{i}}\ar[ru]^{\beta_{p_{i}+\frac{1}{2}}}\ar[r]^{\gamma_{p_{i}}}\ar[dr]_{a_{p_{i}}} & \mathbb{C}_{p_{i}}^{k_{1}+\ldots+k_{i-1}}\ar[r]^{\beta_{p_{i}-\frac{1}{2}}} & \mathbb{C}_{p_{i}}^{k_{1}+\ldots+k_{i-1}}\\
 & \mathbb{\mathbb{C}}_{p_{i}}\ar[ur]_{b_{p_{i}}}
}
\]

The cohomology of these small monads are holomorphic line bundles
defined fibre-wise

\[
L_{p_{i}}(x)=\text{ker}(\mathbb{C}^{2k_{1}+\ldots+2k_{i-1}+k_{i}+1}\rightarrow\mathbb{C}^{k_{1}+\ldots+k_{i-1}})/A_{x}(\mathbb{C}^{k_{1}+\ldots+k_{i}})
\]
which are exactly the line bundles in the splitting of $\mathcal{E}$.

Furthermore, there is a natural interpretation of the maps $A_{x}$
and $B_{x}$, restricted to each weight space of weight $p_{i}$
as a pair of maps,

\[
B_{x}^{t}:\mathbb{C}^{k_{1}+\ldots+k_{i-1}}\rightarrow\mathbb{C}^{2k_{1}+\ldots+2k_{i-1}+k_{i}+1}
\]

\[
A_{x}:\mathbb{C}^{k_{1}+\ldots+k_{i}}\rightarrow B_{x}^{t}(\mathbb{C}^{k_{1}+\ldots+k_{i-1}})^{\perp}\cong\mathbb{C}^{k_{1}+\ldots+k_{i}+1}\subset\mathbb{C}^{2k_{1}+\ldots+2k_{i-1}+k_{i}+1}
\]
defining a map $F_{i}=(A_{x}(H_{p_{i}}),B_{x}(L_{p_{i}})^{\perp})$
into the two term partial flag manifold $\text{Fl}(k_{1}+\ldots+k_{i},k_{1}+\ldots+k_{i}+1,2k_{1}+\ldots+2k_{i-1}+k_{i}+1)$.
Then each line bundle $L_{p_{i}}$ and its $\text{U}(1)$ connection
is the pullback of the invariant line bundle and connection over the
two term partial flag manifold. This proves (2) of the theorem.

The map $F_{i}$ thus defined is an embedding of $\mathbb{P}^{1}$
into the partial flag manifold, for the ADHM equations guarantee that
the monad is non-degenerate \cite{key-8}, and so $\text{im }F_{i}$
has no self-intersections and its derivative is non-zero. Compose
$F_{i}$ with the Plücker embedding and then the Segre embedding to
get 
\[
F_{i}^{\mathbb{P}}:\mathbb{P}^{1}\hookrightarrow\mathbb{P}^{\mathfrak{k}(i)}
\]
where 
\[
\mathfrak{k}(i)=\left(\begin{array}{c}
{\scriptstyle 2k_{1}+\ldots+2k_{i-1}+k_{i}+1}\\
{\scriptstyle k_{1}+\ldots+k_{i}}
\end{array}\right)\left(\begin{array}{c}
{\scriptstyle 2k_{1}+\ldots+2k_{i-1}+k_{i}+1}\\
{\scriptstyle k_{1}+\ldots+k_{i}+1}
\end{array}\right)-1.
\]

The pullback of the $\text{U}(\mathfrak{k}(i)+1)$ invariant connection
$A_{i}$ by the embedding $F_{i}^{\mathbb{P}}$ induces a Kähler form
$F_{A_{i}}$ (the curvature form of $A_{i}$) on $\mathbb{P}^{1}$.
The work of Calabi \cite{key-13} tells us that any such embedding
$F_{i}^{\mathbb{P}}$ is locally rigid, that is, the embedding is
determined by the Kähler form up to the isometry group of the target
space.

Hence the boundary values $(A_{1},\ldots,A_{N-1})$ descend by the
twistor transform to $\text{U}(1)$ connections on $S^{1}$ and determine
the small monad for the weight spaces corresponding to the weights
$p_{1}$,$\ldots,$$p_{N-1}$. These small monads provide boundary
values for the $(N-1)$-interval discrete Nahm equations and their
propagation uniquely specifies a complete solution up to gauge transformations.
Thus the boundary values on $S_{\infty}^{1}$ or equivalently $\mathbb{P}_{+}^{1}$
uniquely determine the monopole.

On the moduli space of $\text{SU}(N)$ framed hyperbolic monopoles,
the boundary values $(A_{1},\ldots,A_{N-1})$ are local coordinates.
Thus $A\mapsto(A_{1},\ldots,A_{N-1})$ is a local immersion of the
moduli of monopoles into the moduli of $(N-1)$-tuples of $\text{U}(1)$
connections on $S^{1}$. 
\end{proof}


\section{Final Remarks}

I have shown in this paper that 

\begin{minipage}[t]{1\columnwidth}%
\begin{enumerate}
\item There is an equivalence between framed $\text{SU}(N)$ hyperbolic
monopoles $(A,\phi)$ of charge $(p_{1},\ldots,p_{N-1})$ and charge
$(k_{1},\ldots,k_{N-1})$, and solutions $(\{\beta_{i}\},\{\gamma_{i}\},\{a_{p_{j}}\},\{b_{p_{i}}\})$
of the $(N-1)$-interval discrete Nahm equations of type $(p_{1},\ldots,p_{N-1};k_{1},\ldots,k_{N-1})$;
\item The rational map $\mathbb{P}^{1}\rightarrow\text{SU}(N)/U(1)^{N-1}$
of a hyperbolic monopole can be written explicitly from a solution
$(\{\beta_{j}\},\{\gamma_{j}\},\{a_{p_{i}}\},\{b_{p_{i}}\})$ of
the discrete Nahm equations; and
\item An $\text{SU}(N)$ hyperbolic monopole $(A,\phi)$ is determined by
its boundary value $(N-1)$-tuple of $\text{U}(1)$ connections $(A_{1},\ldots,A_{N-1})$
on the conformal boundary sphere $\mathbb{P}_{\infty}^{1}$ of $H^{3}$.\end{enumerate}
\end{minipage}

\bigskip{}

Note that the $(N-1)$-interval discrete Nahm equations are are essentially
$(N-1)$ copies of the ($\text{SU}(2)$) discrete Nahm equations linked
by an equation of the form 
\[
\beta_{p_{i}-\frac{1}{2}}\gamma_{p_{i}}-\gamma_{p_{i}}\beta_{p_{i}+\frac{1}{2}}+b_{p_{i}}a_{p_{i}}=0.
\]
It is interesting to interpret the $(N-1)$-interval discrete Nahm
equations as a representation of the type $A$ quiver. The $\text{A}_{N-1}$
Dynkin quiver diagram is the directed graph 
\[
\xymatrix{\circ\ar@{-}[r]\ar[r(0.6)]\ar@(ul,ur) & \circ\ar@{-}[r]\ar[r(0.6)]\ar@(ul,ur) & \circ\ar@(ul,ur)\ar@{-}[r]\ar[r(0.6)] & \ldots &\circ\ar@{-}[r]\ar[r(0.7)]\ar@(ul,ur) & \circ\ar@(ul,ur)}
\]
with $(N-1)$ vertices. Associate a vector space $V_{i}\simeq\mathbb{C}^{k_{1}+\ldots+k_{i}}$
to the $i$-th vertex, the operator $\beta_{[p_{i}]}:V_{i}\rightarrow V_{i}$
(from Section 4) to each curved arrow and the operator $b_{[p_{i}]}a_{[p_{i}]}:V_{i}\rightarrow V_{i+1}$
to each edge between distinct vertices. This is one way that the $(N-1)$-interval
discrete Nahm equations could be treated as a representation of the
$A_{N-1}$ quiver. It would be interesting to study such representations
and their relationship with such quiver representations appearing
in the study of supersymmetric models whose Coulomb branches involve
BPS monopoles \cite{key-14}. 

Another interesting avenue of research would be to study the spectral
curve associated to $\text{SU}(N)$ hyperbolic monopoles in terms
of the $(N-1)$-interval discrete Nahm equations. This is being studied
in on-going work with M.K. Murray. It is known that spectral data
does determine the monopole for a (apparently) different set of decay
conditions \cite{key-15}.

\bigskip{}

\paragraph*{\textit{Acknowledgements}}

I would like to thank my PhD supervisor, Paul Norbury for suggesting
the project and for his guidance. I acknowledge the Australian Department
of Education and Training for providing PhD funding via the Australian
Postgraduate Award.


\end{document}